\documentclass[aps,pra,reprint,superscriptaddress,nofootinbib]{revtex4-2}

\usepackage{amsmath,amssymb,mathtools,bm,amsthm}
\usepackage{graphicx}
\usepackage{booktabs}
\usepackage{microtype}
\usepackage{tikz}
\usepackage[hidelinks]{hyperref}

\newtheorem{theorem}{Theorem}
\newtheorem{proposition}[theorem]{Proposition}
\newtheorem{lemma}[theorem]{Lemma}
\newtheorem{corollary}[theorem]{Corollary}

\newcommand{\C}{\mathbb C}
\newcommand{\R}{\mathbb R}
\newcommand{\Tr}{\operatorname{Tr}}
\newcommand{\rank}{\operatorname{rank}}
\newcommand{\id}{\operatorname{id}}
\newcommand{\End}{\operatorname{End}}
\newcommand{\Sym}{\operatorname{Sym}}
\newcommand{\proj}[1]{|#1\rangle\!\langle#1|}
\newcommand{\ket}[1]{|#1\rangle}
\newcommand{\bra}[1]{\langle#1|}
\newcommand{\Oqd}{\mathcal O_q^{(d)}}
\newcommand{\Nd}{N_d}
\newcommand{\Ad}{A_d}
\newcommand{\Cd}{C_d}

\begin{document}

\title{Asymmetric quantum cloning on orthogonal orbits and four-mode fermionic states}

\author{Piotr {\'C}wikli{\'n}ski}
\affiliation{International Centre for Theory of Quantum Technologies, University of Gda{\'n}sk, ul. prof. Marii Janion 7, 80-309 Gda\'nsk, Poland}

\author{Micha{\l} Studzi{\'n}ski}
\affiliation{International Centre for Theory of Quantum Technologies, University of Gda{\'n}sk, ul. prof. Marii Janion 7, 80-309 Gda\'nsk, Poland}

\date{\today}

\begin{abstract}
In this work, we study asymmetric $(1\to2)$ quantum cloning for pure
states belonging to orbits generated by the orthogonal group. For every
dimension $(d\geq3)$, we determine the full region of achievable pairs of
average single-copy fidelities. Using orthogonal covariance and the Brauer
algebra, we reduce the optimization over all CPTP maps to a
finite-dimensional spectral problem and construct explicit optimal cloning
channels. We then apply the general results to four-mode fermionic states
in the even-parity sector. Using triality, we identify the fixed-concurrence fermionic families with
the orthogonal orbits considered in the first part of the paper. In particular, for pure Gaussian states we
show that the symmetric locally optimal cloner is unique. We also compare
local and global cloning and find that the channel optimal for the joint
two-copy fidelity is different from the one optimal for the single-copy
fidelities. Finally, we show that the locally optimal cloner cannot be
implemented using only operations which preserve fermionic Gaussian states.
\end{abstract}

\keywords{asymmetric quantum cloning, orthogonal-group covariance, Brauer algebra, channel compatibility, fermionic Gaussian states, triality}

\maketitle

\section{Introduction}
\label{sec:introduction}

The no-cloning theorem forbids perfect copying of an arbitrary unknown
quantum state \cite{WoottersZurek1982,Dieks1982}, but approximate copying is
possible.  Approximate cloners are assessed by local fidelities, namely the
overlaps of the reduced outputs with the unknown pure input.  For symmetric
cloning, the figure of merit is the common single-clone fidelity
\cite{BuzekHillery1996,GisinMassar1997,Werner1998,KeylWerner1999}, whereas
asymmetric cloning is described by the region of achievable fidelity pairs
\cite{CerfAsymmetric2000,IblisdirEtAl2005,KayRamanathanKaszlikowski2013,
Hashagen2017,NechitaPellegriniRochette2021,NechitaPellegriniRochette2023};
see also Refs.~\cite{Scarani2005,Fan2014}.  The Pareto boundary gives the optimal tradeoff between the fidelities of
the two copies.  Universal cloners
assume no information about the input beyond its Hilbert-space dimension
\cite{CwikHorStud2012,StudCwikHorMoz2014}.  If the input states are restricted to a group orbit, the optimal cloning
region can be different from the universal one.
Restricted-ensemble cloning has been studied for spin-coherent states,
bosonic coherent states, displaced thermal states, and group-covariant
quantum benchmarks
\cite{DemkowiczKusWodkiewicz2004,Braunstein2001,GutaMatsumoto2006,
YangChiribellaAdesso2014}.  In this work, we determine the complete \(1\mathbin{\to}2\) asymmetric
local-cloning region for pure-state orbits of the orthogonal group.

Our main application concerns four-mode fermionic states.  Fermionic
linear optics (FLO), generated by quadratic dynamics, preserves Gaussian
states and can be efficiently described in terms of covariance matrices
\cite{Terhal2002,Bravyi2005,JozsaMiyake2008,BravyiKonig2012}.  Several
computational models based on FLO can be efficiently simulated classically.
To obtain universality one has to add a non-Gaussian resource, for example
a suitable non-Gaussian ancilla or an interaction beyond the quadratic
level
\cite{BravyiKitaev2002,Bravyi2006,deMeloCwikTerhal2013,HebenstreitEtAl2019,
OBrienRozekAkhmerov2018,SierantStornatiTurkeshi2026}.  For one, two, and
three modes every pure state in a fixed parity sector is Gaussian.  The
situation changes for four modes, where Gaussian states form a proper
subset of all pure states and fermionic concurrence labels the pure-state
FLO orbits \cite{deMeloCwikTerhal2013,OszmaniecGuttKus2014}.  We therefore
study cloning on the fixed-concurrence orbits, including the Gaussian one.

To connect the fermionic problem with orthogonal-group cloning, we use
triality.  In the even-parity sector, the positive half-spin representation
of \(\mathrm{Spin}(8)\) can be related to the vector representation of
\(SO(8)\) \cite{Chevalley1954}.  With the triality intertwiner chosen below,
the squared fermionic concurrence becomes
\(q=|\psi^T\psi|^2\), while the pure Gaussian states correspond to the
orbit \(q=0\) \cite{OszmaniecGuttKus2014}.  More generally, for every
\(q\in[0,1]\), states with fixed \(q\) form a single projective
\(O(d)\) orbit in \(\mathbb{CP}^{d-1}\).  Thus the same cloning problem
covers the complex quadric at \(q=0\), the real-state ensemble at \(q=1\),
and all intermediate orbits.  We solve this problem for arbitrary
\(d\geq3\); the four-mode fermionic case is obtained by setting \(d=8\).

There is one special value of the orbit parameter,
\begin{equation}
 q_{\rm H}=\frac{2}{d+1}.
\end{equation}
For this value the second moment of the orbit is equal to the Haar second
moment, so the orbit is a continuous complex projective \(2\)-design and
the corresponding local-cloning problem reduces to universal cloning.
At the other endpoint, \(q=1\), we recover the real-state ensemble
\cite{NavezCerf2003,ZhangYe2007}.  We calculate the support function of
the complete asymmetric fidelity region for arbitrary real output weights
and construct explicit optimal covariant channels.  We also find that the
symmetric fidelity has a unique minimum at \(q=q_{\rm H}\).  In particular,
for \(d=8\) this minimum is reached neither for Gaussian states
(\(q=0\)) nor for states of maximal concurrence (\(q=1\)).  Thus, in the
four-mode case, the optimal symmetric cloning fidelity is not monotone in
fermionic concurrence.

To solve the optimization, we first use orthogonal covariance.  After
twirling, the Choi operator belongs to the commutant of the diagonal
\(O(d)\) action, which is described by the Brauer algebra
\cite{DarianLoPresti2001,Brauer1937,DotyHu2009}.  This allows us to write
the optimization in terms of permutation and contraction operators and to
reduce the original problem to matrices of small dimension.  For every
nonzero nonnegative supporting direction we then construct explicitly an
optimal covariant channel of Choi rank \(d\).

Finally, for the quadric orbit \(q=0\), we compare two different cloning
criteria.  Besides the local single-copy fidelities, we consider the global
fidelity with the ideal state \(\rho_\psi^{\otimes2}\).  The globally
optimal channel follows from the coherent-state construction of
Ref.~\cite{ChiribellaYang2014}, and we also give a direct derivation.
Interestingly, this channel is different from the symmetric locally
optimal cloner.  We prove that the latter is unique among all CPTP maps and
has Choi rank \(d\).  For \(d=8\), this gives an exact comparison between
local and global cloning of pure four-mode Gaussian states, and the global
value agrees with the coherent-orbit formula of
Ref.~\cite{Herasymenko2026}.  We stress that here we consider the
finite-copy \(1\to2\) problem, which is different from the many-input-copy
regime studied in Refs.~\cite{FanizzaEtAl2026,JeonSohnOh2026}.

Section~\ref{sec:orthogonal-problem} formulates the orbit problem,
Secs.~\ref{sec:region} and \ref{sec:channels} solve it and construct optimal
channels, Sec.~\ref{sec:fermions} gives the fermionic interpretation, and
Sec.~\ref{sec:null-local-global} compares local and global cloning.  Technical
calculations and the uniqueness proof are collected in the Appendices.

\section{Orthogonal-orbit cloning problem}
\label{sec:orthogonal-problem}

\subsection{Orbit geometry}

Let \(\mathcal H=\C^d\), \(d\geq3\), with a fixed orthonormal basis
\(\mathcal B_{\rm O}:=\{\ket{j}\}_{j=1}^d\).  We write \(O(d)\) for the subgroup
of \(U(d)\) whose matrices in \(\mathcal B_{\rm O}\) are real orthogonal;
\(SO(d)\) is its determinant-one subgroup.
Transposition and complex conjugation are taken in this basis.  In
Sec.~\ref{sec:fermions}, where \(\mathcal B_{\rm O}\) is compared with the
fermionic occupation-number basis, we temporarily write \(\ket{j}_{\rm O}\)
for the same vectors.  For a normalized vector \(\psi\in\mathcal H\), define
\begin{equation}
 q(\psi)=|\psi^T\psi|^2\in[0,1]
 \label{eq:orbit-parameter}
\end{equation}
and denote the corresponding projective level set by
\begin{equation}
 \Oqd=\{[\psi]\in\mathbb{CP}^{d-1}:|\psi^T\psi|^2=q\}.
 \label{eq:Oqd}
\end{equation}
Here \(\mathbb{CP}^{d-1}\) is the space of complex one-dimensional
subspaces (rays) in \(\mathcal H\), and \([\psi]\) is the ray containing
\(\psi\).  In Eq.~\eqref{eq:Oqd}, \(\psi\) is any normalized
representative of \([\psi]\).  The
quantity \(|\psi^T\psi|^2\) is independent of the choice of normalized
representative.

\begin{lemma}[Orthogonal orbit normal form]
\label{lem:orbit-normal-form}
Every point of \(\Oqd\) has a normalized representative which, up to an
element of \(SO(d)\), has the form
\begin{equation}
 \psi_q=
 \sqrt{\frac{1+\sqrt q}{2}}\,\ket{1}
 +i\sqrt{\frac{1-\sqrt q}{2}}\,\ket{2}.
 \label{eq:orbit-normal-form}
\end{equation}
Here \(\ket{1},\ket{2}\in\mathcal H\) are a conventional coordinate pair;
any ordered pair of distinct vectors from \(\mathcal B_{\rm O}\) gives an
equivalent normal form.  Consequently \(\Oqd\) is a single projective
\(SO(d)\) orbit, and both the orbit and its normalized invariant measure are
invariant under \(O(d)\).
\end{lemma}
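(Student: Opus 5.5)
Write a normalized $\psi = x + i y$ with $x,y\in\R^d$ real vectors. Then $\|x\|^2+\|y\|^2=1$ and $\psi^T\psi = \|x\|^2-\|y\|^2 + 2i\, x\cdot y$. The plan has three steps: use the phase freedom to make $x\perp y$, read off the two norms from $q$, and then rotate $x,y$ onto the first two coordinate axes with an element of $SO(d)$.

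\textbf{Step 1: fix the phase.} Replacing $\psi$ by $e^{i\theta}\psi$ multiplies $\psi^T\psi$ by $e^{2i\theta}$. We may therefore choose $\theta$ so that $\psi^T\psi=|\psi^T\psi|=\sqrt q\ge 0$. With this representative, $x\cdot y=0$ and $\|x\|^2-\|y\|^2=\sqrt q$. Combined with $\|x\|^2+\|y\|^2=1$, this gives
\[
\|x\|^2=\tfrac{1+\sqrt q}{2},\qquad \|y\|^2=\tfrac{1-\sqrt q}{2}.
\]

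\textbf{Step 2: rotate onto the coordinate pair.} The vectors $x,y$ are real and orthogonal. Extend the normalized nonzero vectors among $x/\|x\|$ and $y/\|y\|$ to a real orthonormal basis of $\R^d$. Flip the sign of one vector not among these two so that the resulting matrix has determinant $+1$; this is possible because $d\ge3$, so at least one such spare vector exists. The matrix $R\in SO(d)$ with this basis as columns sends $\ket1\mapsto x/\|x\|$ and $\ket2\mapsto y/\|y\|$. Hence $R^{-1}\psi=\psi_q$.

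The main point needing care is the degenerate case $q=1$, where $y=0$. There the orthonormal extension simply starts from $x/\|x\|$ alone, and the determinant fix again uses a spare vector. Note that $x\neq0$ always, since $\|x\|^2\ge 1/2$. The same argument applies to any ordered pair of distinct basis vectors in place of $\ket1,\ket2$, because permuting basis vectors, with a sign fix, is an element of $SO(d)$.

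\textbf{Step 3: orbit structure and invariance.} Real orthogonal matrices satisfy $(O\psi)^T(O\psi)=\psi^TO^TO\psi=\psi^T\psi$, and a global phase changes $\psi^T\psi$ only by a phase. So $q$ is $O(d)$-invariant on rays, and $\Oqd$ is $O(d)$-stable. Steps 1–2 show that every point of $\Oqd$ lies in the $SO(d)$-orbit of $[\psi_q]$, and conversely that orbit lies in $\Oqd$. Hence $\Oqd$ is a single projective $SO(d)$ orbit, and thus also a single $O(d)$ orbit.

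It remains to show that the normalized invariant measure is $O(d)$-invariant. The orbit is compact and homogeneous, so the $SO(d)$-invariant probability measure is unique: it is the pushforward of Haar measure. For any $g\in O(d)$, $SO(d)$ is normal in $O(d)$, so the pushforward of this measure by $g$ is again $SO(d)$-invariant. By uniqueness the pushforward equals the original measure, which proves $O(d)$-invariance.
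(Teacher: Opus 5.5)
Your proposal is correct and follows essentially the same route as the paper's proof. The shared steps are: fix the global phase so that $\psi^T\psi=\sqrt q\geq0$, read off $\|x\|^2$ and $\|y\|^2$, use $d\geq3$ to rotate the orthonormal frame (or the single vector $x$ at $q=1$) onto the first coordinate axes by an element of $SO(d)$, and obtain $O(d)$-invariance of the measure from normality of $SO(d)$ in $O(d)$ together with uniqueness of the invariant measure. Your explicit determinant fix and pushforward computation only spell out details the paper leaves implicit.
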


\begin{proof}
Write the coordinate column of \(\psi\) in \(\mathcal B_{\rm O}\) as \(x+iy\),
with \(x,y\in\R^d\).  Multiplication by a global phase acts as a real rotation
of the ordered pair \((x,y)\).  Choose the phase so that \(\psi^T\psi\) is real
and nonnegative.  Then
\begin{equation}
 x\cdot y=0,
 \qquad
 \|x\|^2-\|y\|^2=\sqrt q,
 \qquad
 \|x\|^2+\|y\|^2=1.
\end{equation}
Hence \(\|x\|^2=(1+\sqrt q)/2\) and
\(\|y\|^2=(1-\sqrt q)/2\).  If \(q<1\), the vectors
\(x/\|x\|\) and \(y/\|y\|\) form an ordered orthonormal two-frame.  For
\(d\geq3\), an element of \(SO(d)\) can send this frame to the first two
coordinate axes, represented by \(\ket{1}\) and \(\ket{2}\).  At \(q=1\), one
has \(y=0\), and transitivity of \(SO(d)\) on the unit sphere sends \(x\) to
\(\ket{1}\).  This gives Eq.~\eqref{eq:orbit-normal-form}.  Every \(O\in O(d)\)
preserves \(|\psi^T\psi|\).  Since \(O(d)\) normalizes \(SO(d)\), it also
preserves the unique normalized invariant measure on each orbit.
\end{proof}

The representative in Eq.~\eqref{eq:orbit-normal-form} lies in the
two-dimensional complex subspace
\(\operatorname{span}_{\C}\{\ket{1},\ket{2}\}\), but the action of
\(SO(d)\) moves the corresponding real two-plane through \(\R^d\).
Consequently, the full orbit is in general much larger than this
two-dimensional representative.  A stabilizer count gives \(\operatorname{span}_{\C}\{\ket{1},\ket{2}\}\), while its
real and imaginary parts span at most a real two-plane.  The action of
\(SO(d)\) moves that plane, and hence its complex span, through \(\mathcal H\).
A stabilizer count gives
\begin{equation}
 \dim_{\R}\Oqd=
 \begin{cases}
  2d-4,&q=0,\\
  2d-3,&0<q<1,\\
  d-1,&q=1.
 \end{cases}
 \label{eq:orbit-dimensions}
\end{equation}

We denote the normalized invariant probability measure on \(\Oqd\) by
\(\mu_q\).  At \(q=1\), the vectors are real up to a global phase; at
\(q=0\), their real and imaginary parts are orthogonal and have equal norm.

The vector representation of \(O(d)\) is irreducible over \(\C\), so every
invariant ensemble \((\Oqd,\mu_q)\) is a continuous complex projective
\(1\)-design:
\(\int_{\Oqd}\rho_\psi\,d\mu_q(\psi)=I/d\), where
\(\rho_\psi=\proj{\psi}\).
More generally, a complex projective \(t\)-design is an ensemble whose
average of \(\rho_\psi^{\otimes t}\) equals that for uniformly distributed
complex pure states (the Haar ensemble) \cite{RoyScott2007}; the measure
may be continuous, as it is here.
The orbit \(\mathcal O_0^{(d)}\) is the smooth complex projective quadric
\(Q^{d-2}:=\{[z]\in\mathbb{CP}^{d-1}:z^Tz=0\}\).  Equivalently, it is the
oriented Grassmannian, the space of oriented real two-dimensional planes,
\(\operatorname{Gr}_2^+(\R^d)\simeq
SO(d)/\bigl(SO(2)\times SO(d-2)\bigr)\)
\cite{SuhHwang2016}.  The orbit \(\mathcal O_1^{(d)}\) is
the real projective space \(\mathbb{RP}^{d-1}\), while the orbits with
\(0<q<1\) are principal orbits, the generic orbits of the \(SO(d)\)
action.  This action has cohomogeneity one: its
principal orbits have real codimension one in \(\mathbb{CP}^{d-1}\).

\subsection{Local fidelities, Choi representation, and twirling}

A one-to-two cloner is a completely positive trace-preserving (CPTP) map
\(\Phi:\End(\C^d)\to\End(\C^d\otimes\C^d)\), where
\(\End(\mathcal H)\) denotes the linear operators on \(\mathcal H\).
Both output registers \(A\) and \(B\) have Hilbert space \(\C^d\).
For a fixed channel and input \(\rho_\psi\), write
\begin{equation}
 \begin{aligned}
  \rho_{AB}(\psi)&:=\Phi(\rho_\psi),\\
  \rho_A(\psi)&:=\Tr_B\rho_{AB}(\psi),&
  \rho_B(\psi)&:=\Tr_A\rho_{AB}(\psi).
 \end{aligned}
 \label{eq:output-states}
\end{equation}
For density operators \(\rho\) and \(\sigma\), we use the
Uhlmann--Jozsa fidelity \cite{Uhlmann1976,Jozsa1994}, with the convention
\begin{equation}
 \mathsf F(\rho,\sigma)
 :=\left[\Tr\sqrt{\sqrt{\rho}\,\sigma\sqrt{\rho}}\right]^2.
 \label{eq:state-fidelity}
\end{equation}
Since the target \(\rho_\psi\) is pure,
this reduces to
\(\mathsf F(\rho_\psi,\sigma)=\Tr(\rho_\psi\sigma)
=\langle\psi|\sigma|\psi\rangle\).  The pointwise local, or
single-clone, fidelities are therefore
\cite{KeylWerner1999,CerfAsymmetric2000,IblisdirEtAl2005,
KayRamanathanKaszlikowski2013}
\begin{equation}
 \begin{aligned}
  f_A(\psi;\Phi)&:=\Tr[\rho_\psi\rho_A(\psi)],\\
  f_B(\psi;\Phi)&:=\Tr[\rho_\psi\rho_B(\psi)].
 \end{aligned}
 \label{eq:pointwise-fidelities}
\end{equation}
For an input drawn from \(\mu_q\), the
orbit-averaged local fidelities are
\begin{align}
 F_A^{(d,q)}(\Phi)
 &=\int_{\Oqd}f_A(\psi;\Phi)\,d\mu_q(\psi),\label{eq:FA}\\
 F_B^{(d,q)}(\Phi)
 &=\int_{\Oqd}f_B(\psi;\Phi)\,d\mu_q(\psi).\label{eq:FB}
\end{align}
We suppress the superscripts when \(d\) and \(q\) are fixed.

The achievable fidelity region is
\begin{equation}
 \mathcal F_q^{(d)}=
 \left\{\bigl(F_A^{(d,q)}(\Phi),F_B^{(d,q)}(\Phi)\bigr):
 \Phi\ \text{is CPTP}\right\}\subset[0,1]^2.
 \label{eq:fidelity-region}
\end{equation}
It is compact and convex because the set of finite-dimensional CPTP maps is
compact and convex and the two averaged fidelities are affine in \(\Phi\).
Consequently, the complete region is determined by its support function
\begin{align}
 h_q^{(d)}(a,b)
 &=\max_{(x,y)\in\mathcal F_q^{(d)}}(ax+by)\nonumber\\
 &=\max_{\Phi\ {\rm CPTP}}
 \{aF_A^{(d,q)}(\Phi)+bF_B^{(d,q)}(\Phi)\}.
 \label{eq:support-definition}
\end{align}
Here \(a,b\in\R\).  The region is the intersection of the half-planes
\(ax+by\leq h_q^{(d)}(a,b)\) over all real directions \((a,b)\).
Nonzero nonnegative directions support its cloning-relevant upper boundary;
when \(a,b>0\), every point of the exposed face (the set of maximizers of
\(ax+by\)) is Pareto optimal.  A fidelity pair is Pareto optimal if no
other achievable pair increases one fidelity without decreasing the other.

Figure~\ref{fig:cloning-schematic} depicts the channel and its two reduced
outputs.

\begin{figure}
 \centering
 \begin{tikzpicture}[
   x=1cm,
   y=1cm,
   font=\small,
   wire/.style={->,line width=0.65pt},
   machine/.style={draw,rounded corners=2pt,align=center,
     line width=0.65pt,inner sep=4pt,fill=black!7},
   joint/.style={draw,rounded corners=2pt,align=center,
     text width=3.45cm,line width=0.85pt,inner sep=4pt,
     fill=black!3},
   clone/.style={draw,rounded corners=2pt,align=center,
     text width=2.05cm,line width=0.6pt,inner sep=4pt}
 ]
  \node[align=center] (input) at (0,3.55)
    {unknown input $\rho_\psi$\\[-1pt]
     {\scriptsize $[\psi]\sim\mu_q$ on $\Oqd$}};

  \node[machine,minimum width=1.35cm,minimum height=0.78cm]
    (cloner) at (0,2.25)
    {$\Phi$\\[-1pt]{\scriptsize CPTP $1\!\to\!2$}};
  \draw[wire] (input.south) -- (cloner.north);

  \node[joint,minimum height=0.95cm]
    (jointout) at (0,0.80)
    {\textbf{joint output on $A\otimes B$}\\[-1pt]
     $\rho_{AB}(\psi)=\Phi(\rho_\psi)$\\[-1pt]
     {\scriptsize possibly correlated}};
  \draw[wire] (cloner.south) -- (jointout.north);

  \node[clone,minimum height=1.10cm,
    font=\footnotesize]
    (cloneB) at (-1.32,-1.55)
    {clone $B$\\[-1pt]
     $\rho_B(\psi)$\\[-2pt]
     $=\Tr_A\rho_{AB}(\psi)$\\[-1pt]
     {\scriptsize average fidelity}\\[-2pt]
     {\scriptsize $F_B^{(d,q)}(\Phi)$}};
  \node[clone,minimum height=1.10cm,
    font=\footnotesize]
    (cloneA) at (1.32,-1.55)
    {clone $A$\\[-1pt]
     $\rho_A(\psi)$\\[-2pt]
     $=\Tr_B\rho_{AB}(\psi)$\\[-1pt]
     {\scriptsize average fidelity}\\[-2pt]
     {\scriptsize $F_A^{(d,q)}(\Phi)$}};

  \draw[wire] (jointout.south -| cloneB.north) -- (cloneB.north)
    node[midway,left,font=\scriptsize,align=right]
    {discard $A$};
  \draw[wire] (jointout.south -| cloneA.north) -- (cloneA.north)
    node[midway,right,font=\scriptsize,align=left]
    {discard $B$};
 \end{tikzpicture}
 \caption{Operational setup for asymmetric one-to-two cloning on an
 orthogonal orbit.  The channel maps the unknown input $\rho_\psi$ to
 a joint state on $A\otimes B$, which is not assumed to factorize.  The downward
 branches denote partial traces: discarding $A$ gives clone $B$, while
 discarding $B$ gives clone $A$.  The pointwise overlaps are
 $f_A(\psi;\Phi)$ and $f_B(\psi;\Phi)$; their orbit averages are the
 fidelities in Eqs.~\eqref{eq:FA} and \eqref{eq:FB}.}
 \label{fig:cloning-schematic}
\end{figure}

Let \(C'\simeq\C^d\) denote the channel input and let
\(C\simeq\C^d\) be an untouched reference system.  We use the unnormalized
Choi convention \cite{Jamiolkowski1972,Choi1975},
\begin{equation}
 \begin{aligned}
  \ket{\Gamma_d}_{C'C}&=\sum_{j=1}^d\ket{j}_{C'}\ket{j}_C,\\
  J_{\Phi,ABC}
  &=(\Phi_{C'\to AB}\otimes\id_C)(\proj{\Gamma_d}_{C'C}),
 \end{aligned}
 \label{eq:Choi-definition}
\end{equation}
so that
\begin{equation}
 J_\Phi\geq0,
 \qquad
 \Tr_{AB}J_\Phi=I_C,
 \qquad
 \Tr J_\Phi=d.
 \label{eq:choi-conditions}
\end{equation}
Here \(\id_C\) is the identity map on reference operators.  The Choi rank
of \(\Phi\) is \(\rank J_\Phi\), equivalently the minimum number of
operators in a Kraus representation \cite{Choi1975}.
Because the ensemble measure is \(O(d)\)-invariant, an arbitrary channel may be twirled as
\begin{equation}
 \overline\Phi(X)=\int_{O(d)}
 (O\otimes O)\Phi(O^T X O)(O^T\otimes O^T)\,dO.
 \label{eq:Od-twirl}
\end{equation}
Here \(dO\) is the normalized Haar probability measure on \(O(d)\).  The
twirled map is CPTP, has the same two averaged fidelities, and is covariant:
\begin{equation}
 \overline\Phi(OXO^T)
 =(O\otimes O)\overline\Phi(X)(O^T\otimes O^T).
 \label{eq:twirled-covariance}
\end{equation}
For a covariant channel, transitivity also gives
\(f_A(O\psi;\Phi)=f_A(\psi;\Phi)\), and similarly for \(B\).
Each pointwise fidelity is therefore constant on the orbit and equals its
orbit average.  We may optimize over \(O(d)\)-covariant channels without
restricting the achievable fidelity pairs.  The reference
factor of a Choi operator generally carries the conjugate input
representation; here \(\overline O=O\), so
\begin{equation}
 [J_{\overline\Phi},O_A\otimes O_B\otimes O_C]=0
 \qquad (O\in O(d)).
 \label{eq:choi-orthogonal-invariance}
\end{equation}

\subsection{Orthogonal commutant and the Brauer algebra}
\label{sec:brauer-guide}

By Eq.~\eqref{eq:choi-orthogonal-invariance}, the twirled Choi operator on
\(A\otimes B\otimes C\simeq(\C^d)^{\otimes3}\) belongs to
\(\End_{O(d)}((\C^d)^{\otimes3})\), the three-copy orthogonal commutant:
the algebra of all operators commuting with every \(O^{\otimes3}\).
The fixed basis defines the complex-bilinear form
\begin{equation}
 \begin{aligned}
 \mathsf b(u,v)&:=u^Tv=\sum_{j=1}^d u_jv_j,\\
 \mathsf b(Ou,Ov)&=\mathsf b(u,v)
 \qquad\bigl(O\in O(d)\bigr).
 \end{aligned}
 \label{eq:orthogonal-bilinear-form}
\end{equation}
Unlike the Hermitian inner product \(u^\dagger v\), this form contains no
complex conjugation.  In particular, the orbit parameter in
Eq.~\eqref{eq:orbit-parameter} is \(q(\psi)=|\mathsf b(\psi,\psi)|^2\).
The matrix of \(\mathsf b\) in \(\mathcal B_{\rm O}\) is the Kronecker delta,
\(\mathsf b(\ket i,\ket j)=\delta_{ij}\), where \(\delta_{ij}=1\) for
\(i=j\) and \(0\) otherwise.  The corresponding normalized invariant tensor
and contraction operator are
\begin{equation}
 \begin{aligned}
 \ket{\Omega_d}&=\frac1{\sqrt d}\sum_{j=1}^d\ket{j,j},
 &e_d&=d\proj{\Omega_d},\\
 e_d\ket{i,j}&=\delta_{ij}\sum_{k=1}^d\ket{k,k}.&&
 \end{aligned}
 \label{eq:Omega-d}
\end{equation}
Thus \(e_d\) contracts two equal indices and reinserts the invariant tensor.

More generally, the commutant of the diagonal \(O(d)\) action on
\((\C^d)^{\otimes n}\) is generated by permutations of tensor factors and
such contractions.  These operations define the tensor-space representation
of the Brauer algebra \(B_n(d)\), whose image is
\(\End_{O(d)}((\C^d)^{\otimes n})\)
\cite{Brauer1937,Wenzl1988,DotyHu2009}.  This is the orthogonal analogue of
Schur--Weyl duality: for the unitary action on an ordinary tensor power only
permutations occur, whereas orthogonal invariance also permits two tensor
indices to be contracted.

A basis of \(B_n(d)\) is indexed by pairings of \(2n\) vertices arranged in two rows.  Multiplication is defined by concatenating two pairing diagrams and assigning a factor \(d\) to each closed loop.  Figure~\ref{fig:brauer-diagrams} illustrates three-strand diagrams and the loop rule.  In the tensor-space representation, the elementary generators are the swaps \(s_{ij}\) and the contractions
\(e_{ij}=d\proj{\Omega_d}_{ij}\otimes I_{\widehat{ij}}\), where
\(I_{\widehat{ij}}\) acts on the remaining tensor factors.  Thus
\begin{equation}
 s_{ij}^2=I,
 \qquad e_{ij}^2=d e_{ij},
 \qquad s_{ij}e_{ij}=e_{ij}s_{ij}=e_{ij}.
 \label{eq:brauer-basic-relations}
\end{equation}

\begin{figure}
 \centering
 \begin{tikzpicture}[
   x=1.05cm,
   y=0.90cm,
   line width=0.6pt,
   vertex/.style={circle,fill=black,inner sep=1.15pt},
   seam/.style={circle,draw=gray,fill=white,inner sep=0.9pt}
 ]
  \begin{scope}[xshift=0cm]
   \node[vertex] at (0,2.15) {};
   \node[vertex] at (0.42,2.15) {};
   \node[vertex] at (0.84,2.15) {};
   \node[vertex] at (0,1.35) {};
   \node[vertex] at (0.42,1.35) {};
   \node[vertex] at (0.84,1.35) {};
   \draw (0,2.15)--(0,1.35);
   \draw (0.42,2.15)--(0.42,1.35);
   \draw (0.84,2.15)--(0.84,1.35);
   \node at (0.42,1.02) {$I$};
  \end{scope}

  \begin{scope}[xshift=1.65cm]
   \node[vertex] at (0,2.15) {};
   \node[vertex] at (0.42,2.15) {};
   \node[vertex] at (0.84,2.15) {};
   \node[vertex] at (0,1.35) {};
   \node[vertex] at (0.42,1.35) {};
   \node[vertex] at (0.84,1.35) {};
   \draw (0,2.15)--(0.42,1.35);
   \draw (0.42,2.15)--(0,1.35);
   \draw (0.84,2.15)--(0.84,1.35);
   \node at (0.42,1.02) {$s_{12}$};
  \end{scope}

  \begin{scope}[xshift=3.30cm]
   \node[vertex] at (0,2.15) {};
   \node[vertex] at (0.42,2.15) {};
   \node[vertex] at (0.84,2.15) {};
   \node[vertex] at (0,1.35) {};
   \node[vertex] at (0.42,1.35) {};
   \node[vertex] at (0.84,1.35) {};
   \draw (0,2.15) .. controls (0.08,1.86) and (0.34,1.86) .. (0.42,2.15);
   \draw (0,1.35) .. controls (0.08,1.64) and (0.34,1.64) .. (0.42,1.35);
   \draw (0.84,2.15)--(0.84,1.35);
   \node at (0.42,1.02) {$e_{12}$};
  \end{scope}

  \begin{scope}[xshift=0.45cm]
   \node[vertex] at (0,0.60) {};
   \node[vertex] at (0.42,0.60) {};
   \node[vertex] at (0.84,0.60) {};
   \node[vertex] at (0,-0.60) {};
   \node[vertex] at (0.42,-0.60) {};
   \node[vertex] at (0.84,-0.60) {};
   \node[seam] at (0,0) {};
   \node[seam] at (0.42,0) {};
   \node[seam] at (0.84,0) {};
   \draw (0,0.60) .. controls (0.08,0.32) and (0.34,0.32) .. (0.42,0.60);
   \draw (0,0) .. controls (0.08,0.28) and (0.34,0.28) .. (0.42,0);
   \draw (0,0) .. controls (0.08,-0.28) and (0.34,-0.28) .. (0.42,0);
   \draw (0,-0.60) .. controls (0.08,-0.32) and (0.34,-0.32) .. (0.42,-0.60);
   \draw (0.84,0.60)--(0.84,0);
   \draw (0.84,0)--(0.84,-0.60);
   \node at (0.42,-1.13) {$e_{12}\,e_{12}$};
  \end{scope}

  \node at (2.20,0) {$=$};
  \node at (2.70,0) {$d$};

  \begin{scope}[xshift=3.05cm]
   \node[vertex] at (0,0.52) {};
   \node[vertex] at (0.42,0.52) {};
   \node[vertex] at (0.84,0.52) {};
   \node[vertex] at (0,-0.52) {};
   \node[vertex] at (0.42,-0.52) {};
   \node[vertex] at (0.84,-0.52) {};
   \draw (0,0.52) .. controls (0.08,0.24) and (0.34,0.24) .. (0.42,0.52);
   \draw (0,-0.52) .. controls (0.08,-0.24) and (0.34,-0.24) .. (0.42,-0.52);
   \draw (0.84,0.52)--(0.84,-0.52);
   \node at (0.42,-1.13) {$d\,e_{12}$};
  \end{scope}
 \end{tikzpicture}
 \caption{Pairing diagrams in \(B_3(d)\).  Each diagram has three upper
 (output) and three lower (input) vertices, ordered from left to right.
 Upper row: the identity, transposition \(s_{12}\), and contraction
 \(e_{12}\); each leaves the third tensor factor unchanged.  Lower row:
 multiplication by vertical concatenation.  The hollow circles mark the
 identified vertices.  Summing the common index around the closed loop
 contributes \(d\), giving \(e_{12}^2=d e_{12}\).}
 \label{fig:brauer-diagrams}
\end{figure}

The diagram representation is standard in invariant theory and
orthogonal-group integration \cite{CollinsSniady2006}.  In orthogonally
invariant quantum-spin models, \(s_{ij}\) and \(e_{ij}\) occur as two-site
interaction terms, and Brauer representation theory gives the corresponding
block decomposition used in spectral and thermodynamic analyses
\cite{Ryan2023}.  In quantum information, \(s_{ij}\) exchanges two registers
and \(e_{ij}\) is \(d\) times the projector onto a maximally entangled
vector.  Partial transposition interchanges them,
\begin{equation}
 s_{ij}^{T_j}=e_{ij},
 \label{eq:swap-contraction-partial-transpose}
\end{equation}
Here \(T_j\) denotes partial transposition: it transposes only the \(j\)th
tensor factor in the fixed basis and leaves the others unchanged.
Equation~\eqref{eq:swap-contraction-partial-transpose}
connects the same algebra to marginal compatibility, extendibility, and
entanglement-monogamy problems for orthogonally invariant states
\cite{JohnsonViola2013,SolymosJakabZimboras2026,AllerstorferEtAl2026}.

The cloning objective depends only on the two-copy orbit moment.  Its
partial transpose acts on two factors; embedding it on \(AC\) or \(BC\),
with the identity on the remaining factor, gives an operator in the image
of \(B_3(d)\).

\section{Second moments and complete asymmetric fidelity region}
\label{sec:region}

\subsection{Exact two-copy moment}

Let \(s\) denote the swap on \(\C^d\otimes\C^d\).
Set
\begin{align}
 \Nd&=(d-1)(d+2),
 &\Ad&=d-q,\nonumber\\
 \Cd&=(d+1)q-2.&&
 \label{eq:parameters}
\end{align}

\begin{theorem}[Two-copy moment of an orthogonal orbit]
\label{thm:moment}
For every \(d\geq3\) and \(q\in[0,1]\),
\begin{equation}
 M_2^{(d)}(q)
 :=\int_{\Oqd}\rho_\psi^{\otimes2}\,d\mu_q(\psi)
 =\frac{\Ad(I+s)+\Cd e_d}{d\Nd}.
 \label{eq:general-moment}
\end{equation}
\end{theorem}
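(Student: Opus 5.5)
The plan is to combine invariance with three scalar constraints. By Lemma~\ref{lem:orbit-normal-form}, the measure \(\mu_q\) is \(O(d)\)-invariant. Hence \(M_2:=M_2^{(d)}(q)\) commutes with every \(O\otimes O\), so it lies in \(\End_{O(d)}((\C^d)^{\otimes2})\). By the Brauer description recalled in Sec.~\ref{sec:brauer-guide}, this commutant is the image of \(B_2(d)\), which is spanned by \(I\), \(s\) and \(e_d\). So we may write \(M_2=\alpha I+\beta s+\gamma e_d\).

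Next we use the symmetry of \(\rho_\psi^{\otimes2}\). Since \(s\rho_\psi^{\otimes2}=\rho_\psi^{\otimes2}\), we have \(sM_2=M_2\). Using \(se_d=e_d\) and \(s^2=I\), this gives \(sM_2=\beta I+\alpha s+\gamma e_d\), hence \(\alpha=\beta\). Only two unknowns \(\alpha\) and \(\gamma\) remain, and we fix them with two linear functionals whose values on \(\rho_\psi^{\otimes2}\) are constant along the orbit.

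The first functional is the trace: \(\Tr M_2=1\). The second is \(\Tr(e_dM_2)\). Pointwise, \(\Tr(e_d\rho_\psi^{\otimes2})=d\,|\langle\Omega_d|\psi\otimes\psi\rangle|^2=|\sum_j\psi_j\psi_j|^2=|\psi^T\psi|^2=q\). This value is independent of the point of \(\Oqd\), so \(\Tr(e_dM_2)=q\).

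The required traces are elementary: \(\Tr I=d^2\), \(\Tr s=\Tr e_d=d\), \(\Tr(e_d)=d\), \(\Tr(e_ds)=d\), and \(\Tr(e_d^2)=d^2\). The two constraints become
\[
\alpha(d^2+d)+\gamma d=1,\qquad 2\alpha d+\gamma d^2=q .
\]
Solving gives \(\alpha=(d-q)/(d(d-1)(d+2))\) and \(\gamma=((d+1)q-2)/(d(d-1)(d+2))\), which is Eq.~\eqref{eq:general-moment} with \(\Nd,\Ad,\Cd\) as in Eq.~\eqref{eq:parameters}. The determinant of the system is \(d^2(d^2+d-2)=d^2\Nd\), which is nonzero for \(d\geq3\).

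The only non-mechanical step is the first one: the claim that \(I,s,e_d\) span the commutant. This is the \(n=2\) case of Brauer's theorem, which the paper quotes. If a self-contained argument were preferred, I would decompose \(\C^d\otimes\C^d\) into the antisymmetric part, the traceless symmetric part and \(\C\ket{\Omega_d}\). These are three pairwise inequivalent irreducible \(O(d)\)-modules for \(d\geq3\), so by Schur's lemma the commutant is three-dimensional, and the three operators \(I\), \(s\), \(e_d\) are linearly independent. Since we never need linear independence beyond solvability of the linear system, even just the spanning statement suffices.
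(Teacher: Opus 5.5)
Your proof is correct and is essentially the paper's argument. Both proofs use $O(d)$-invariance to reduce $M_2^{(d)}(q)$ to the two-dimensional commutant on the symmetric square, and then fix it by the trace and by the overlap with $\Omega_d$, equivalently $\Tr(e_dM_2)=q$. The only difference is the basis: the paper applies Schur's lemma to $\Sym^2(\C^d)=\C\Omega_d\oplus\Sym_0^2(\C^d)$ and works with the idempotents $\proj{\Omega_d}$, $\Pi_0^{(d)}$, so the two constraints decouple, whereas you use the Brauer basis $I+s$, $e_d$ and solve a $2\times2$ system with determinant $d^2\Nd$.
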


\begin{proof}
The symmetric square \(\Sym^2(\C^d)\), consisting of tensors unchanged by
the swap \(s\), decomposes under \(O(d)\) as
\begin{equation}
 \Sym^2(\C^d)=\C\Omega_d\oplus\Sym_0^2(\C^d),
 \qquad
\dim\Sym_0^2(\C^d)=\frac{\Nd}{2}.
\end{equation}
Here \(\Sym_0^2(\C^d)\) denotes symmetric tensors whose coefficient
matrix \(X=(X_{ij})\) is traceless, \(\sum_iX_{ii}=0\).
Equivalently, this is the
orthogonal complement of \(\Omega_d\) in the symmetric square.
The orbit moment is invariant and supported on the symmetric square, so
Schur's lemma implies that it is scalar on each irreducible summand.  Its
weight on the invariant line \(\C\Omega_d\) is
\(\bra{\Omega_d}M_2^{(d)}(q)\ket{\Omega_d}
=\bigl|\langle\Omega_d|\psi^{\otimes2}\rangle\bigr|^2=q/d\).
Therefore
\begin{align}
 M_2^{(d)}(q)
 &=\frac qd\proj{\Omega_d}
 +\frac{1-q/d}{\Nd/2}
 \left(\frac{I+s}{2}-\proj{\Omega_d}\right)\nonumber\\
 &=\frac{(d-q)(I+s)+[(d+1)q-2]e_d}{d(d-1)(d+2)}.
\end{align}
\end{proof}

For the quadric orbit, let
\begin{equation}
 \Pi_0^{(d)}=\frac{I+s}{2}-\proj{\Omega_d}
 \label{eq:Pi0}
\end{equation}
denote the projector onto the traceless symmetric square.  At the three
distinguished values of \(q\), Eq.~\eqref{eq:general-moment} reduces to
\begin{align}
 q=0:\quad&
 M_2^{(d)}(0)=\frac{2}{\Nd}\Pi_0^{(d)},
 \label{eq:null-moment}\\
 q=q_{\rm H}:=\frac{2}{d+1}:\quad&
 M_2^{(d)}(q_{\rm H})=\frac{I+s}{d(d+1)},
 \label{eq:Haar-moment}\\
 q=1:\quad&
 M_2^{(d)}(1)=\frac{I+s+e_d}{d(d+2)}.
 \label{eq:real-moment}
\end{align}
At \(q=q_{\rm H}\), this is the complex Haar second moment.  Hence
\(\mathcal O_{q_{\rm H}}^{(d)}\) is a continuous complex projective
\(2\)-design \cite{RoyScott2007}, and its averaged local-fidelity cloning
problem coincides with universal \(d\)-dimensional cloning.

The quadric-orbit moment is maximally mixed on a codimension-one subspace of
the symmetric square.  Since \(\rank\Pi_0^{(d)}=\Nd/2\),
\begin{equation}
 \frac12\left\|
 M_2^{(d)}(0)-\frac{I+s}{d(d+1)}
 \right\|_1=\frac{2}{d(d+1)},
 \label{eq:null-Haar-distance}
\end{equation}
where \(\|X\|_1:=\Tr\sqrt{X^\dagger X}\) is the trace norm.
The second frame potential is the average fourth power of the overlap
between two independently sampled states:
\begin{equation}
 \iint |\langle\psi|\phi\rangle|^4\,
 d\mu_q(\psi)d\mu_q(\phi)
 =\Tr\!\left[M_2^{(d)}(q)^2\right].
 \label{eq:frame-potential}
\end{equation}
At \(q=0\), it equals \(2/\Nd\), compared with the Haar value
\(2/[d(d+1)]\) \cite{RoyScott2007}.

By Eq.~\eqref{eq:swap-contraction-partial-transpose}, partial transposition
exchanges \(s\) and \(e_d\).  Hence the one-clone fidelity operator is
\begin{equation}
 R_q^{(d)}=
\int_{\Oqd}\rho_\psi\otimes\rho_\psi^T\,d\mu_q(\psi)
 =\frac{\Ad(I+e_d)+\Cd s}{d\Nd}.
 \label{eq:general-R}
\end{equation}
Here and below, a two-body operator carrying subscripts \(AC\) or \(BC\) is
understood as an operator on \(A\otimes B\otimes C\), with the identity on
the unused output factor.  In our unnormalized convention, the Choi
reconstruction formula is
\begin{equation}
 \Phi(X)=\Tr_C\!\left[J_\Phi\bigl(I_{AB}\otimes X_C^T\bigr)\right].
 \label{eq:Choi-reconstruction}
\end{equation}
Equations~\eqref{eq:FA}--\eqref{eq:FB} and
\eqref{eq:Choi-reconstruction} yield
\begin{equation}
 \begin{aligned}
 F_A^{(d,q)}(\Phi)&=\Tr\!\left[J_\Phi R_{q,AC}^{(d)}\right],\\
 F_B^{(d,q)}(\Phi)&=\Tr\!\left[J_\Phi R_{q,BC}^{(d)}\right].
 \end{aligned}
 \label{eq:fidelities-Choi-R}
\end{equation}

\subsection{Three-copy Brauer reduction}

For real weights \(a,b\), define
\begin{equation}
 K_q^{(d)}(a,b)=aR_{q,AC}^{(d)}+bR_{q,BC}^{(d)}.
 \label{eq:general-K}
\end{equation}
Equation~\eqref{eq:fidelities-Choi-R} implies
\(aF_A^{(d,q)}(\Phi)+bF_B^{(d,q)}(\Phi)
=\Tr[J_\Phi K_q^{(d)}(a,b)]\).
The resulting optimization over positive Choi matrices with a fixed
partial trace is a semidefinite program.  The following lemma shows that
its optimum is given by the largest eigenvalue of \(K_q^{(d)}(a,b)\):
\begin{equation}
 h_q^{(d)}(a,b)=d\lambda_{\max}\!\left[K_q^{(d)}(a,b)\right].
 \label{eq:eigenvalue-reduction}
\end{equation}

\begin{lemma}[Spectral achievability]
\label{lem:spectral-achievability}
Let \(P_{\max}\) be the spectral projector onto the maximal eigenspace of
\(K_q^{(d)}(a,b)\), and let \(r_{\max}=\rank P_{\max}\).  Then
\begin{equation}
 J_{a,b}=\frac{d}{r_{\max}}P_{\max}
 \label{eq:top-projector}
\end{equation}
is feasible and attains the optimum in
Eq.~\eqref{eq:eigenvalue-reduction}.
\end{lemma}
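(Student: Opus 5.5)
We need an upper bound and a matching construction. For the upper bound, take any Choi operator satisfying Eq.~\eqref{eq:choi-conditions}. Since \(J_\Phi\geq0\) and \(\Tr J_\Phi=d\),
\[
\Tr[J_\Phi K_q^{(d)}(a,b)]\leq\lambda_{\max}\bigl[K_q^{(d)}(a,b)\bigr]\Tr J_\Phi=d\lambda_{\max}\bigl[K_q^{(d)}(a,b)\bigr].
\]
This bound uses only positivity and the trace, not the full partial-trace constraint.

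For achievability, take \(J_{a,b}=(d/r_{\max})P_{\max}\). It is positive, and its trace is \(d\). Moreover \(\Tr[J_{a,b}K_q^{(d)}(a,b)]=(d/r_{\max})\lambda_{\max}r_{\max}=d\lambda_{\max}\), so the bound is attained. The only nontrivial point is the trace-preservation condition \(\Tr_{AB}J_{a,b}=I_C\).

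For that condition we use covariance. By Eq.~\eqref{eq:general-R}, \(R_q^{(d)}\) is a combination of \(I\), \(s\) and \(e_d\), each commuting with \(O\otimes O\). Embedded on \(AC\) or \(BC\), these operators commute with \(O_A\otimes O_B\otimes O_C\). Hence \(K_q^{(d)}(a,b)\) lies in \(\End_{O(d)}((\C^d)^{\otimes3})\), and so do all its spectral projectors, in particular \(P_{\max}\).

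Therefore \(\Tr_{AB}P_{\max}\) is an operator on \(C\) commuting with every \(O\in O(d)\). The vector representation of \(O(d)\) is irreducible over \(\C\), as recalled before Theorem~\ref{thm:moment}, so Schur's lemma gives \(\Tr_{AB}P_{\max}=cI_C\). Taking traces gives \(cd=r_{\max}\), so \(\Tr_{AB}J_{a,b}=(d/r_{\max})(r_{\max}/d)I_C=I_C\). Thus \(J_{a,b}\) is the Choi operator of a CPTP map, and Eq.~\eqref{eq:eigenvalue-reduction} follows.

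The step needing care is the invariance argument: the Choi reference factor must transform under \(O\) rather than \(\overline O\), which holds since \(\overline O=O\), as in Eq.~\eqref{eq:choi-orthogonal-invariance}. One must also note that the swaps and contractions acting on \(AC\) and \(BC\) commute with the full diagonal action. Everything else is immediate.
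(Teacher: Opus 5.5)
Your proof is correct and follows essentially the same route as the paper's. Optimality comes from $\Tr[JK_q^{(d)}(a,b)]\le\lambda_{\max}\Tr J=d\lambda_{\max}$. Feasibility comes from $O^{\otimes3}$-invariance of $P_{\max}$ together with Schur's lemma on the irreducible reference factor $C$, which gives $\Tr_{AB}P_{\max}=(r_{\max}/d)I_C$.
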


\begin{proof}
The projector \(P_{\max}\) commutes with \(O^{\otimes3}\).  Taking the
partial trace and using its invariance under conjugation on \(A\) and
\(B\) shows that \(\Tr_{AB}P_{\max}\) commutes with \(O_C\).  The vector
representation on \(C\) is irreducible, so this partial trace equals
\(\alpha I_C\).  Taking traces gives \(d\alpha=r_{\max}\), so
Eq.~\eqref{eq:top-projector} is positive and satisfies the trace-preservation
constraint.  Its objective value is \(d\lambda_{\max}\).  Conversely,
every feasible \(J\) has trace \(d\), and
\begin{align}
 \Tr[J K_q^{(d)}(a,b)]
 &\leq\lambda_{\max}[K_q^{(d)}(a,b)]\Tr J\nonumber\\
 &=d\lambda_{\max}[K_q^{(d)}(a,b)].
\end{align}
\end{proof}

Identify tensor factors \(1,2,3\) with \(A,B,C\), respectively.  Define
embeddings \(T_\alpha:\C^d\to(\C^d)^{\otimes3}\) by
\begin{align}
 T_1(v)&=\frac1{\sqrt d}\sum_{j=1}^d
 \ket v_A\otimes\ket j_B\otimes\ket j_C,\nonumber\\
 T_2(v)&=\frac1{\sqrt d}\sum_{j=1}^d
 \ket j_A\otimes\ket v_B\otimes\ket j_C,\nonumber\\
 T_3(v)&=\frac1{\sqrt d}\sum_{j=1}^d
 \ket j_A\otimes\ket j_B\otimes\ket v_C.
 \label{eq:Tmaps}
\end{align}
For \(X=\sum_{i,j,k}X_{ijk}\ket{i,j,k}\), define contraction over factors
\(2,3\) by
\(\mathsf c_{23}(X)=\sum_{i,j}X_{ijj}\ket i\), and define
\(\mathsf c_{13}\) and \(\mathsf c_{12}\) analogously.  These maps contract
tensor indices using the invariant form \(\delta_{ij}\); they are not
density-operator partial traces.  The embeddings in Eq.~\eqref{eq:Tmaps}
are their normalized adjoints; for example,
\(T_1=d^{-1/2}\mathsf c_{23}^\dagger\).  Define
\begin{equation}
 \begin{aligned}
 \mathcal T_d
 &:=\sum_{\alpha=1}^3\operatorname{Ran}T_\alpha\\
 &=\sum_{1\leq j<k\leq3}\operatorname{Ran}\mathsf c_{jk}^\dagger.
 \end{aligned}
 \label{eq:trace-sector-definition}
\end{equation}
Each \(T_\alpha\) intertwines the vector and three-copy actions, meaning
\(O^{\otimes3}T_\alpha=T_\alpha O\).
The space \(\mathcal T_d\) is the vector isotypic component of
\((\C^d)^{\otimes3}\) under
\(O(d)\), namely the sum of all submodules isomorphic to the defining vector
representation.  The maps \(T_1,T_2,T_3\) give a nonorthogonal basis of its
three-dimensional multiplicity space: its coordinates label the three
equivalent copies of the vector representation.  Equivalently, \(\mathcal T_d\) is
obtained by inserting the invariant metric into any pair of indices of a
rank-three tensor.  Its orthogonal complement is the traceless, or harmonic,
tensor subspace \cite{DotyHu2009},
\begin{equation}
 \mathcal T_d^\perp
 =\ker\mathsf c_{23}\cap\ker\mathsf c_{13}\cap\ker\mathsf c_{12}.
 \label{eq:traceless-sector}
\end{equation}
Since \(e_{jk}=\mathsf c_{jk}^\dagger\mathsf c_{jk}\), every contraction
operator vanishes on \(\mathcal T_d^\perp\).  The Gram matrix of the three
embeddings in multiplicity space is
\begin{equation}
 \begin{aligned}
 T_\alpha^\dagger T_\beta&=(G_d)_{\alpha\beta}I_{\C^d},\\
 G_d&=
 \begin{pmatrix}
 1&1/d&1/d\\
 1/d&1&1/d\\
 1/d&1/d&1
 \end{pmatrix}>0.
 \end{aligned}
 \label{eq:general-G}
\end{equation}
Its eigenvalues are \(1+2/d\) and \(1-1/d\), the latter with multiplicity
two.  Thus it is positive for \(d\geq3\), and the ranges of the three
embeddings span the \(3d\)-dimensional vector isotypic component
\(\mathcal T_d\).  In the
ordered multiplicity coordinates \((T_1,T_2,T_3)\), the matrix representing
\(d\Nd K_q^{(d)}(a,b)\) is
\begin{equation}
 \resizebox{0.98\columnwidth}{!}{$\displaystyle
 B_q^{(d)}(a,b)=
 \begin{pmatrix}
 \Ad a+\Nd b&\Ad b&\Cd a+\Ad b\\
 \Ad a&\Nd a+\Ad b&\Ad a+\Cd b\\
 \Cd a&\Cd b&\Ad(a+b)
 \end{pmatrix}. $}
 \label{eq:general-B}
\end{equation}
Since the multiplicity basis is not orthonormal,
\(B_q^{(d)}(a,b)\) need not be a symmetric matrix.  Instead,
\(\bigl(B_q^{(d)}(a,b)\bigr)^TG_d=G_dB_q^{(d)}(a,b)\).
The real matrix \(B_q^{(d)}(a,b)\) is therefore self-adjoint for the
Hermitian inner product
\(\langle x,y\rangle_{G_d}=x^\dagger G_dy\) on \(\C^3\).  Equivalently,
\(G_d^{1/2}B_q^{(d)}(a,b)G_d^{-1/2}\) is real symmetric.  Hence its
eigenvalues are real, and its eigenvectors may be chosen with real
multiplicity coordinates.  Appendix~\ref{app:spectrum} derives this block
and the spectrum on its orthogonal complement.

\subsection{Complete support function}

On \(\mathcal T_d^\perp\), only permutations remain.  The symmetric group
\(S_3\), which permutes the three tensor factors, has the trivial and sign
representations on fully symmetric and fully antisymmetric tensors, and a
two-dimensional standard representation on tensors of mixed permutation
symmetry.  These yield four eigenvalue branches in addition to the three
on the vector multiplicity space:

\begin{theorem}[Complete orthogonal-orbit support function]
\label{thm:support}
Let \(s_1=a+b\) and \(r=\sqrt{a^2+b^2-ab}\), and let
\(\mu_1,\mu_2,\mu_3\) be the eigenvalues of \(B_q^{(d)}(a,b)\).  Then, for
every \(d\geq3\), \(q\in[0,1]\), and \(a,b\in\R\),
\begin{equation}
 h_q^{(d)}(a,b)=\frac1{\Nd}
 \max\left\{
 \begin{array}{l}
 \mu_1,\mu_2,\mu_3,\\
 (d-2+dq)s_1,\\
 (d+2)(1-q)s_1,\\
 \Ad s_1+\Cd r,\\
 \Ad s_1-\Cd r
 \end{array}
 \right\}.
 \label{eq:general-support}
\end{equation}
The characteristic polynomial of \(B_q^{(d)}(a,b)\) is given in
Eq.~\eqref{eq:general-cubic} of Appendix~\ref{app:spectrum}.
\end{theorem}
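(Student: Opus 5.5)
By Lemma~\ref{lem:spectral-achievability}, \(h_q^{(d)}(a,b)=d\lambda_{\max}[K_q^{(d)}(a,b)]=\lambda_{\max}[d\Nd K_q^{(d)}(a,b)]/\Nd\). The plan is therefore to compute the full spectrum of
\[
 d\Nd K_q^{(d)}(a,b)=\Ad(a+b)I+\Ad\bigl(a\,e_{13}+b\,e_{23}\bigr)+\Cd\bigl(a\,s_{13}+b\,s_{23}\bigr),
\]
which follows from Eq.~\eqref{eq:general-R}. Here factors \(1,2,3\) are \(A,B,C\). This operator lies in the image of \(B_3(d)\) and is Hermitian. Hence it preserves the vector isotypic component \(\mathcal T_d\) and also its orthogonal complement \(\mathcal T_d^\perp\). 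The spectrum is then the union of the spectra on these two pieces.

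On \(\mathcal T_d\) I will compute the action of \(s_{13},s_{23},e_{13},e_{23}\) on the embeddings \(T_\alpha(v)\). Each generator maps \(\operatorname{Ran}T_\beta\) into \(\sum_\alpha\operatorname{Ran}T_\alpha\), with a scalar coefficient times the identity on \(v\). Two typical cases are \(s_{13}T_2=T_2\) with the indices of \(A\) and \(C\) relabeled, and \(e_{13}T_1(v)\). The latter is obtained by contracting \(v_A\) with the \(C\) index, which gives \(\sqrt d\) times a multiple of \(T_2(v)\) up to normalization, together with loop factors \(d\) in cases such as \(e_{13}T_2=d\,T_2\). Collecting these coefficients should reproduce the matrix \(B_q^{(d)}(a,b)\) of Eq.~\eqref{eq:general-B}. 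This is the computation deferred to Appendix~\ref{app:spectrum}.

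Because the \(T_\alpha\) are linearly independent (\(G_d>0\)), the restriction of \(d\Nd K\) to \(\mathcal T_d\simeq\C^3\otimes\C^d\) is \(B_q^{(d)}(a,b)\otimes I_d\). Its eigenvalues are therefore \(\mu_1,\mu_2,\mu_3\), each with multiplicity \(d\). They are real by the \(G_d\)-self-adjointness noted after Eq.~\eqref{eq:general-B}.

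On \(\mathcal T_d^\perp\) every contraction vanishes, so the operator reduces to \(\Ad s_1 I+\Cd(a\,s_{13}+b\,s_{23})\). I will decompose \(\mathcal T_d^\perp\) under \(S_3\) into its trivial, sign and standard isotypic parts.
\begin{itemize}
 \item On fully symmetric tensors, \(s_{13}=s_{23}=1\). This gives \((\Ad+\Cd)s_1=(d-2+dq)s_1\).
 \item On fully antisymmetric tensors, \(s_{13}=s_{23}=-1\). This gives \((\Ad-\Cd)s_1=(d+2)(1-q)s_1\).
 \item On the standard representation, the transpositions \(s_{13}\) and \(s_{23}\) act as reflections whose axes meet at angle \(\pi/3\). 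One checks that \(\Tr(a\,s_{13}+b\,s_{23})=0\) and \((a\,s_{13}+b\,s_{23})^2=(a^2+b^2)I+ab(s_{13}s_{23}+s_{23}s_{13})\). The 3-cycles have trace \(-1\) in this representation, so the square equals \((a^2+b^2-ab)I\). The eigenvalues are therefore \(\pm r\), giving \(\Ad s_1\pm\Cd r\).
\end{itemize}
Taking the maximum over all seven branches and dividing by \(\Nd\) yields Eq.~\eqref{eq:general-support}.

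The main obstacle, beyond the bookkeeping for \(B_q^{(d)}\), is to justify that each of the three \(S_3\) sectors actually occurs in \(\mathcal T_d^\perp\) for every \(d\geq3\). Otherwise a listed branch could be spurious and the maximum would be overstated. I would argue this explicitly.
\begin{itemize}
 \item Antisymmetric tensors are automatically traceless, and \(\wedge^3\C^d\neq0\) for \(d\geq3\). An example is \(\ket{1}\wedge\ket{2}\wedge\ket{3}\).
 \item A fully symmetric traceless tensor is given by \(z^{\otimes3}\) with \(z^Tz=0\), for instance \(z=\ket1+i\ket2\).
 \item A mixed-symmetry harmonic tensor can be built from \(z^{\otimes2}\otimes w\) with \(z^Tz=z^Tw=w^Tw=0\) and \(w\) independent of \(z\). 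For example take \(z=\ket1+i\ket2\) and \(w=\ket1-i\ket2\); this choice fails because \(z^Tw\neq0\). Instead I would take \(w=\ket3+i\ket1\) when \(d\ge4\). When \(d=3\), I would instead exhibit the \((2,1)\) harmonic module directly, since it has dimension \(5\). Projecting \(z^{\otimes2}\otimes w\) onto the standard isotypic part gives a nonzero vector.
\end{itemize}
Alternatively, a dimension count suffices: the harmonic part has dimension \(d^3-3d\), which exceeds the sum of the dimensions of its symmetric and antisymmetric pieces.
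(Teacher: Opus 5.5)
Your proposal is correct and follows the paper's own route: Lemma~\ref{lem:spectral-achievability}, then the splitting into the vector block \(B_q^{(d)}(a,b)\otimes I_d\) on \(\mathcal T_d\) (generator matrices as in Appendix~\ref{app:spectrum}) and the \(S_3\) decomposition of \(\mathcal T_d^\perp\); your check that every branch actually occurs is exactly the dimension count behind Table~\ref{tab:spectrum-multiplicities}. One slip in the optional explicit example: for \(z=\ket{1}+i\ket{2}\), the vector \(w=\ket{3}+i\ket{1}\) has \(z^Tw=i\neq0\), so take \(w=\ket{3}+i\ket{4}\) when \(d\geq4\), and for \(d=3\) (where no two-dimensional totally isotropic subspace exists) rely on your dimension count, which gives the standard sector dimension \(2d(d^2-4)/3>0\).
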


\begin{proof}
By Eq.~\eqref{eq:traceless-sector}, the contraction operators vanish on
\(\mathcal T_d^\perp\), and
\(d\Nd K_q^{(d)}(a,b)=\Ad(a+b)I+\Cd(as_{13}+bs_{23})\).
The trivial and sign representations of \(S_3\) give \((\Ad+\Cd)s_1=(d-2+dq)s_1\) and \((\Ad-\Cd)s_1=(d+2)(1-q)s_1\).  On the standard representation, \(as_{13}+bs_{23}\) has eigenvalues \(\pm r\), producing the last two branches.  The block on the vector multiplicity space is Eq.~\eqref{eq:general-B}; expansion of \(\det(\mu I-B_q^{(d)})\) gives Eq.~\eqref{eq:general-cubic}.  Lemma~\ref{lem:spectral-achievability} converts the largest spectral value into an attainable Choi optimum.
\end{proof}

\begin{corollary}[Dominance of the vector isotypic component for nonnegative weights]
\label{cor:positive}
For \(a,b\geq0\), not both zero, the maximizing branch in Eq.~\eqref{eq:general-support} lies in the vector isotypic component.  If \(\mu_+(d,q;a,b)\) is the largest root of Eq.~\eqref{eq:general-cubic}, then
\begin{equation}
 h_q^{(d)}(a,b)=\frac{\mu_+(d,q;a,b)}{\Nd}.
 \label{eq:positive-support}
\end{equation}
\end{corollary}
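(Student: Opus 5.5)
The plan is to compare the three vector-sector roots \(\mu_1,\mu_2,\mu_3\) of \(B_q^{(d)}(a,b)\) with the four traceless branches of Theorem~\ref{thm:support} by a variational argument. Since \(\langle\cdot,\cdot\rangle_{G_d}\) makes \(B_q^{(d)}(a,b)\) self-adjoint, its largest root is
\[
\mu_+(d,q;a,b)=\max_{x\in\R^3\setminus\{0\}}\frac{x^TG_dB_q^{(d)}(a,b)x}{x^TG_dx}.
\]
So it suffices to produce, for each traceless branch, a vector in \(\mathcal T_d\) whose Rayleigh quotient for \(d\Nd K_q^{(d)}(a,b)\) is at least that branch value. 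Then Eq.~\eqref{eq:positive-support} follows from Eq.~\eqref{eq:general-support}.

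First I split the operator as in the proof of Theorem~\ref{thm:support}:
\[
d\Nd K_q^{(d)}(a,b)=L+\Ad\,(a\,e_{13}+b\,e_{23}),\qquad L:=\Ad s_1 I+\Cd(a\,s_{13}+b\,s_{23}).
\]
For \(a,b\ge0\) and \(\Ad=d-q>0\), the correction is positive semidefinite, because each \(e_{jk}=\mathsf c_{jk}^\dagger\mathsf c_{jk}\ge0\). Hence \(d\Nd K\ge L\) on \(\mathcal T_d\). Next I use that \(L\) lies in the group algebra of \(S_3\). The value of \(L\) on any \(S_3\)-isotypic piece depends only on the irrep type: \((\Ad+\Cd)s_1\) on the trivial type, \((\Ad-\Cd)s_1\) on the sign type, and \(\Ad s_1\pm\Cd r\) on the standard type. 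The permutations of tensor factors preserve \(\mathcal T_d\) and permute \(T_1,T_2,T_3\). The multiplicity space is therefore the permutation representation of \(S_3\), which is trivial \(\oplus\) standard. Consequently \(\mathcal T_d\) contains eigenvectors of \(L\) with eigenvalues \((d-2+dq)s_1\) and \(\Ad s_1\pm\Cd r\). Evaluating \(d\Nd K\ge L\) on these vectors gives \(\mu_+\ge\) each of these three branches, since the Rayleigh quotient of \(d\Nd K\) restricted to \(\mathcal T_d\) is bounded by \(\mu_+\).

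The remaining and main obstacle is the sign branch \((d+2)(1-q)s_1\). The sign representation does not occur in the multiplicity space, so the comparison with \(L\) gives nothing, and the bound must use the contraction terms explicitly. This branch exceeds the trivial one exactly when \(\Cd<0\), i.e. \(q<q_{\rm H}\), so that regime is where a real argument is needed. A quick check shows that the single test vector \(x=(0,0,1)\) (i.e. \(T_3\)) gives the quotient \(s_1\bigl[\Ad+(\Ad+\Cd)/d\bigr]\), which is too small near \(q=0\). I would instead try the family \(x=(1,1,-t)\) and optimize over \(t\). If needed, I would weight the first two entries by \(\sqrt a\) and \(\sqrt b\) to handle asymmetric \((a,b)\). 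The target is an inequality
\[
x^TG_dB_q^{(d)}x-(d+2)(1-q)s_1\,x^TG_dx\ge0,
\]
which is quadratic in \(t\) and linear in \(a,b\). I would verify it by checking its discriminant, using \(\Ad=d-q\), \(\Cd=(d+1)q-2\) and \(d\ge3\).

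If this test-vector route is too lossy, the fallback is the characteristic polynomial, Eq.~\eqref{eq:general-cubic}. I would evaluate \(\chi(\mu)=\det(\mu I-B_q^{(d)})\) at \(\mu=(d+2)(1-q)s_1\) and show that \(\chi\le0\) there. Since \(\chi\) is monic of degree three with real roots, \(\chi\le0\) at a point means that point lies at or below the largest root, giving \(\mu_+\ge(d+2)(1-q)s_1\). By homogeneity this is a polynomial inequality in the single ratio \(a/(a+b)\in[0,1]\) and \(q\in[0,1]\), for each \(d\ge3\).
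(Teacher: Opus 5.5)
Your handling of the trivial and standard branches is correct. For $a,b\ge0$ the correction $\Ad(a e_{13}+b e_{23})$ is positive semidefinite. The multiplicity space of $\mathcal T_d$ carries the permutation representation of $S_3$ (trivial $\oplus$ standard), so $\mathcal T_d$ contains eigenvectors of $L$ with eigenvalues $(d-2+dq)s_1$ and $\Ad s_1\pm\Cd r$, and $\mu_+$ dominates those three branches.

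The gap is the sign branch $(d+2)(1-q)s_1$. It is the largest complementary branch whenever $\Cd<0$, i.e.\ on the whole range $0\le q<q_{\rm H}$, which contains the Gaussian orbit. That range is exactly where the corollary says something beyond your $S_3$ comparison. For this branch the proposal only lists candidate test vectors, a possible weighting, and a fallback polynomial inequality in $a/(a+b)$ and $q$, and carries none of them out. As written, the corollary is established only for $q\ge q_{\rm H}$, where $\Cd\ge0$ makes the sign branch no larger than the trivial one.

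The missing step is one line, using your own family at $t=0$. Take $x=(1,1,0)$, i.e.\ the test vector $T_1(v)+T_2(v)$. By Eq.~\eqref{eq:pencil-identity}, $x^TM_A^{(d)}x=x^TM_B^{(d)}x=(d+3)/d$. Hence $x^TG_dB_q^{(d)}(a,b)x=\Nd(d+3)s_1/d$, while $x^TG_dx=2(d+1)/d$. This gives
\[
\mu_+\ \ge\ \frac{(d-1)(d+2)(d+3)}{2(d+1)}\,s_1\ >\ (d+2)s_1\ \ge\ (d+2)(1-q)s_1 ,
\]
because $(d-1)(d+3)>2(d+1)$ for $d\ge3$. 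No weighting is needed. Note also that in the ordering $(T_1,T_2,T_3)$ it is $T_2$ that carries the perfect $A$ copy, so the weighting $(\sqrt a,\sqrt b,\cdot)$ you suggest would favour the wrong output.

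The paper argues differently. It uses $R_q^{(d)}\ket{\Omega_d}=\ket{\Omega_d}/d$ on the vectors $T_2(v)$ and $T_1(v)$ to get $\lambda_{\max}[K_q^{(d)}]\ge\max\{a,b\}/d$. It then bounds all four complementary branches at once by $(\Ad+|\Cd|)s_1$, which leaves an equality case at $d=3$, $q=0$, $a=b$ that must be checked by hand. With your $(1,1,0)$ bound the same crude estimate works uniformly, since $\Nd(d+3)/[2(d+1)]>\max\{2d-2,d+2\}\ge\Ad+|\Cd|$ for all $d\ge3$. So once a test vector handles the sign branch, your $L$-comparison is no longer needed, though it does give a cleaner structural reason why the trivial and standard branches lose.
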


\begin{proof}
Put \(M=\max\{a,b\}\).  Since \(R_q^{(d)}\geq0\) and
\(R_q^{(d)}\ket{\Omega_d}=\ket{\Omega_d}/d\),
\(\lambda_{\max}[K_q^{(d)}(a,b)]\geq M/d\).  Every complementary
eigenvalue has magnitude at most
\((\Ad+|\Cd|)(a+b)/(d\Nd)\leq2M(\Ad+|\Cd|)/(d\Nd)\).
For \(d\geq4\), \(2(\Ad+|\Cd|)\leq4d-4<\Nd\), so this is strictly smaller than \(M/d\).  For \(d=3\), equality in the bound can occur only for \(q=0\) and \(a=b=M\).  For the rescaled operator \(d\Nd K_q^{(d)}\), the largest complementary branch is then \(10M\), whereas the largest eigenvalue on the vector isotypic component is \((11+\sqrt{21})M>10M\).  Thus the vector isotypic component dominates for every \(d\geq3\).
\end{proof}

\section{Optimal channels and special orthogonal orbits}
\label{sec:channels}

\subsection{Explicit covariant channels in arbitrary dimension}

For \(c=(c_1,c_2,c_3)^T\in\R^3\) satisfying
\begin{equation}
 c^T G_d c=1,
 \label{eq:cnorm}
\end{equation}
define \(d\) Kraus operators \(K_k:\C^d\to\C^d\otimes\C^d\) by
\begin{align}
 K_k={}&\frac{c_1}{\sqrt d}\sum_{j=1}^d\ket{k,j}\bra j
 +\frac{c_2}{\sqrt d}\sum_{j=1}^d\ket{j,k}\bra j\nonumber\\
 &+\frac{c_3}{\sqrt d}\sum_{j=1}^d\ket{j,j}\bra k,
 \qquad k=1,\ldots,d.
 \label{eq:general-kraus}
\end{align}

Define
\begin{align}
 M_A^{(d)}(q)&=
 \begin{pmatrix}
 1/d&1/d&q/d\\
 1/d&1&1/d\\
 q/d&1/d&1/d
 \end{pmatrix},\label{eq:MA}\\[1mm]
 M_B^{(d)}(q)&=
 \begin{pmatrix}
 1&1/d&1/d\\
 1/d&1/d&q/d\\
 1/d&q/d&1/d
 \end{pmatrix}.\label{eq:MB}
\end{align}

\begin{theorem}[Explicit optimal covariant channels]
\label{thm:kraus}
The map
\begin{equation}
 \Phi_c(\rho)=\sum_{k=1}^dK_k\rho K_k^\dagger
 \label{eq:Phic}
\end{equation}
is an \(O(d)\)-covariant CPTP channel.  Its Choi matrix \(J_{\Phi_c}\) is an
orthogonal projector of rank \(d\).  Write
\(F_A^{(d,q)}(c):=F_A^{(d,q)}(\Phi_c)\), and similarly for \(B\).
Its local fidelities on \(\Oqd\) are
\begin{equation}
 F_A^{(d,q)}(c)=c^TM_A^{(d)}(q)c,
 \qquad
 F_B^{(d,q)}(c)=c^TM_B^{(d)}(q)c.
\label{eq:FM}
\end{equation}
For every nonnegative supporting direction \((a,b)\neq(0,0)\), an optimal
channel is obtained by choosing \(c\) as a normalized top generalized
eigenvector of
\begin{equation}
 [aM_A^{(d)}(q)+bM_B^{(d)}(q)]c
 =h_q^{(d)}(a,b)G_dc.
 \label{eq:generalized-eigenproblem}
\end{equation}
For \(a,b>0\), its fidelity pair is Pareto optimal.
\end{theorem}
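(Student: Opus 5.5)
The plan is to identify the Choi operator of \(\Phi_c\) with a rank-one projector in the vector multiplicity space, and then read off everything from the Brauer reduction of Sec.~\ref{sec:region}.

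\textbf{Step 1: the Choi operator.} By Eq.~\eqref{eq:Choi-definition},
\(J_{\Phi_c}=\sum_k (K_k\otimes I_C)\proj{\Gamma_d}(K_k^\dagger\otimes I_C)\).
Applying \(K_k\otimes I\) to \(\ket{\Gamma_d}\) gives \(\sum_m\) of the three terms. A direct index check shows that
\[
(K_k\otimes I)\ket{\Gamma_d}=\sum_\alpha c_\alpha T_\alpha(\ket k)=:V\ket k,
\qquad V:=\sum_\alpha c_\alpha T_\alpha .
\]
For example, the \(c_3\) term gives \(d^{-1/2}\sum_j\ket{j,j}_{AB}\ket k_C=T_3\ket k\). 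Hence \(J_{\Phi_c}=VV^\dagger\).

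By Eq.~\eqref{eq:general-G},
\[
V^\dagger V=\sum_{\alpha\beta}c_\alpha c_\beta (G_d)_{\alpha\beta}I=c^TG_dc\,I=I,
\]
using Eq.~\eqref{eq:cnorm}. So \(V\) is an isometry, and \(J_{\Phi_c}=VV^\dagger\) is an orthogonal projector of rank \(d\). The same computation yields trace preservation, since \(\sum_kK_k^\dagger K_k=V^\dagger V\) up to the identification of the Kraus operators with \(V\). Alternatively, compute \(\Tr_{AB}VV^\dagger\) via Schur's lemma, as in Lemma~\ref{lem:spectral-achievability}, and then fix the scalar by the trace \(d\).

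\textbf{Step 2: covariance.} Since \(O^{\otimes3}T_\alpha=T_\alpha O\), we get \(O^{\otimes3}VV^\dagger O^{T\otimes3}=VV^\dagger\). This is Eq.~\eqref{eq:choi-orthogonal-invariance}, which is equivalent to Eq.~\eqref{eq:twirled-covariance} because \(\overline O=O\).

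\textbf{Step 3: fidelities.} By Eq.~\eqref{eq:fidelities-Choi-R},
\[
F_A=\Tr[VV^\dagger R_{q,AC}]=\Tr[V^\dagger R_{q,AC}V]
=\sum_{\alpha\beta}c_\alpha c_\beta \Tr[T_\alpha^\dagger R_{q,AC}T_\beta].
\]
Each \(T_\alpha^\dagger R_{q,AC}T_\beta\) is an intertwiner of the irreducible vector representation, so it is a scalar \(m_{\alpha\beta}I\) and its trace is \(d\,m_{\alpha\beta}\).

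To evaluate these scalars, insert Eq.~\eqref{eq:general-R}, \(R=[\Ad(I+e)+\Cd s]/(d\Nd)\), acting on \(AC\). Then evaluate \(I\), \(e_{13}\), \(s_{13}\) between the \(T\)'s by loop counting. The expected outcome is that \(d\cdot(\text{matrix})\) collapses to \(M_A^{(d)}(q)\), and symmetrically \(M_B^{(d)}(q)\) for \(BC\). This agrees with \(B_q^{(d)}\) of Eq.~\eqref{eq:general-B}, in the sense that \(G_dB_q^{(d)}(1,0)/\Nd=d\,M_A\)-type identities should hold.

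A cleaner route avoids the loop counting. Compute \(f_A\) directly at \(\psi\) and use \(\rho_A(\psi)=\sum_k\Tr_B K_k\rho_\psi K_k^\dagger\). The \(c_1\) term reproduces \(\psi\) on \(A\), the \(c_2\) term gives \(I/d\) on \(A\), and the \(c_3\) term gives \(\bar\psi\psi^T\)-type contributions on \(A\). The cross terms then produce \(|\psi^T\psi|^2=q\) in the \((1,3)\) entry and \(1/d\) elsewhere. I would do this direct computation, since the entries of \(M_A\) visibly arise this way, and it is the main bookkeeping obstacle.

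\textbf{Step 4: optimality.} Write \(c\) so that \(Vc\) ranges over vectors in the multiplicity space. The objective becomes
\[
aF_A+bF_B=c^T(aM_A+bM_B)c\quad\text{subject to}\quad c^TG_dc=1.
\]
This is a generalized Rayleigh quotient, and its maximum is the top generalized eigenvalue of Eq.~\eqref{eq:generalized-eigenproblem}. That eigenvalue equals \(\mu_+/\Nd\), since \(aM_A+bM_B\) equals \(G_dB_q^{(d)}(a,b)/\Nd\), the Gram representation of \(dK\) restricted to \(\mathcal T_d\). By Corollary~\ref{cor:positive}, this equals \(h_q^{(d)}(a,b)\).

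Real eigenvectors exist because \(G_d^{-1/2}(aM_A+bM_B)G_d^{-1/2}\) is real symmetric. Pareto optimality for \(a,b>0\) follows from the remark after Eq.~\eqref{eq:support-definition}: every point of the exposed face is Pareto optimal.
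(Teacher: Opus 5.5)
Your proposal is correct and follows the paper's proof essentially step for step: $J_{\Phi_c}=W_cW_c^\dagger$ with $W_c=\sum_\alpha c_\alpha T_\alpha$ an isometry by the Gram identity, covariance from $O^{\otimes3}T_\alpha=T_\alpha O$, the fidelities by direct evaluation against $R_q^{(d)}$, and optimality from the pencil identity $aM_A^{(d)}(q)+bM_B^{(d)}(q)=G_dB_q^{(d)}(a,b)/\Nd$ combined with Corollary~\ref{cor:positive}. When you write it out, fix three bookkeeping slips:
\begin{itemize}
\item In $\rho_A(\psi)$ it is the $c_2$ term that reproduces $\psi$, which is why $(M_A^{(d)})_{22}=1$. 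The $c_1^2$ and $c_3^2$ terms contribute multiples of $I/d$. The $\rho_\psi^T$ term, which gives the $q$ in the $(1,3)$ entry, comes from the $c_1c_3$ cross term.
\item The identity is $G_dB_q^{(d)}(1,0)/\Nd=M_A^{(d)}(q)$, with no extra factor $d$.
\item $\sum_kK_k^\dagger K_k=V^\dagger V$ is not a general identity, since $V^\dagger V$ is the Gram matrix $[\Tr(K_k^\dagger K_l)]_{kl}$. Trace preservation should therefore rest on your Schur-lemma argument with $\Tr J_{\Phi_c}=\Tr V^\dagger V=d$, or on the direct computation of $\sum_kK_k^\dagger K_k$ as in the paper's appendix.
\end{itemize}
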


\begin{proof}
Equation~\eqref{eq:Phic} is a Kraus representation, so \(\Phi_c\) is
completely positive.  Appendix~\ref{app:spectrum},
Eq.~\eqref{eq:kraus-completeness-calculation}, gives
\(\sum_{k=1}^dK_k^\dagger K_k=(c^TG_dc)I=I\), where
Eq.~\eqref{eq:cnorm} gives the last equality.  Hence \(\Phi_c\) is
trace preserving.  Define the map
\begin{equation}
 \begin{aligned}
 W_c&:=c_1T_1+c_2T_2+c_3T_3,\\
 W_c v&=c_1\ket v_A\ket{\Omega_d}_{BC}
 +c_2\ket{\Omega_d}_{AC}\ket v_B
 +c_3\ket{\Omega_d}_{AB}\ket v_C,
 \end{aligned}
 \label{eq:Wc}
\end{equation}
where tensor-factor labels indicate the location of \(v\).  The Kraus
operators in Eq.~\eqref{eq:general-kraus} satisfy
\(W_c\ket{k}_C=(K_k\otimes I_C)\ket{\Gamma_d}_{C'C}\),
\(k=1,\ldots,d\), and therefore
\begin{equation}
 J_{\Phi_c}
 =\sum_{k=1}^d(K_k\otimes I_C)\proj{\Gamma_d}_{C'C}
 (K_k^\dagger\otimes I_C)
 =W_cW_c^\dagger.
 \label{eq:Wc-Choi-relation}
\end{equation}
Each embedding \(T_\alpha\) is an \(O(d)\)-intertwiner, so
\(O^{\otimes3}W_c=W_cO\).  Equation~\eqref{eq:Wc-Choi-relation} then shows
that the Choi matrix is invariant and the channel is covariant.  Moreover,
\(W_c^\dagger W_c=(c^TG_dc)I=I\), so \(W_c\) is an isometry and
\(\rank J_{\Phi_c}=d\).  Direct evaluation against
\(R_{q,AC}^{(d)}\) and \(R_{q,BC}^{(d)}\) gives Eq.~\eqref{eq:FM}, with
the matrices in Eqs.~\eqref{eq:MA} and \eqref{eq:MB}.  Finally,
\begin{equation}
 aM_A^{(d)}(q)+bM_B^{(d)}(q)
 =\frac1{\Nd}G_dB_q^{(d)}(a,b),
 \label{eq:pencil-identity}
\end{equation}
so the generalized eigenproblem coincides with the spectral problem on the
vector isotypic component in Theorem~\ref{thm:support}.
\end{proof}

When the top generalized eigenvalue is simple, \(c\) is unique up to the
sign \(c\mapsto-c\), which leaves \(\Phi_c\) unchanged.  If the maximal
eigenspace is degenerate, convex combinations of the corresponding optimal
channels remain optimal and may have Choi rank larger than \(d\).

Define the two reduced channels by \(\Phi_{c,A}:=\Tr_B\circ\Phi_c\) and
\(\Phi_{c,B}:=\Tr_A\circ\Phi_c\).
They have the orthogonally covariant form
\begin{align}
 \Phi_{c,A}(\rho)
 &=\alpha_A\rho+\beta_A\rho^T+\gamma_A\Tr(\rho)I,\label{eq:marginal-A}\\
 \Phi_{c,B}(\rho)
 &=\alpha_B\rho+\beta_B\rho^T+\gamma_B\Tr(\rho)I,\label{eq:marginal-B}
\end{align}
with
\begin{align}
 \alpha_A&=c_2^2+\frac{2}{d}(c_1c_2+c_2c_3),
 &\beta_A&=\frac{2c_1c_3}{d},\nonumber\\
 \gamma_A&=\frac{c_1^2+c_3^2}{d},&&\nonumber\\
 \alpha_B&=c_1^2+\frac{2}{d}(c_1c_2+c_1c_3),
 &\beta_B&=\frac{2c_2c_3}{d},\nonumber\\
 \gamma_B&=\frac{c_2^2+c_3^2}{d}.&&
 \label{eq:marginal-coefficients}
\end{align}
These coefficients obey
\(\alpha_A+\beta_A+d\gamma_A=\alpha_B+\beta_B+d\gamma_B=c^TG_dc=1\), as
required for trace preservation.  They also give
\(\alpha_A+q\beta_A+\gamma_A=c^TM_A^{(d)}(q)c\) and
\(\alpha_B+q\beta_B+\gamma_B=c^TM_B^{(d)}(q)c\).  For
\([\psi]\in\mathcal O_q^{(d)}\),
\(\Tr(\rho_\psi\rho_\psi^T)=|\psi^T\psi|^2=q\); hence the dependence on
\(q\) enters through the transpose terms in the reduced channels.

\subsection{Symmetric, universal, real-state, and minimal-fidelity points}

A channel is exchange symmetric if
\(s_{AB}\Phi(X)s_{AB}=\Phi(X)\) for every input operator \(X\).
This implies equal marginal channels and equal fidelities; equal average
fidelities alone do not imply exchange symmetry.
At \(a=b=1\), exchanging the two outputs sends
\((c_1,c_2,c_3)\) to \((c_2,c_1,c_3)\).  The antisymmetric vector
\((1,-1,0)\) has eigenvalue \(\Nd\).  With respect to the ordered basis
\(\bigl((1,1,0),(0,0,1)\bigr)\)
of the exchange-symmetric multiplicity subspace, which is nonorthonormal
in the \(G_d\) inner product, the restriction of \(B_q^{(d)}(1,1)\)
is represented by
\begin{equation}
 \begin{pmatrix}
 \Nd+2\Ad&\Ad+\Cd\\
 2\Cd&2\Ad
 \end{pmatrix}.
 \label{eq:symmetric-multiplicity-block}
\end{equation}
Writing
\(\Delta=\Nd^2+8\Cd(\Ad+\Cd)\), its larger eigenvalue is
\(\mu_+=(\Nd+4\Ad+\sqrt\Delta)/2\).  It is strictly larger than the
antisymmetric eigenvalue because
\begin{equation}
 \Delta-(\Nd-4\Ad)^2=8\Nd(d-2+q^2)>0.
 \label{eq:symmetric-sector-dominance}
\end{equation}
Thus \(\mu_+\) is the top eigenvalue on the vector isotypic component, and
Corollary~\ref{cor:positive} gives
\(h_q^{(d)}(1,1)=\mu_+/\Nd\).  Moreover, output symmetrization replaces any
channel \(\Phi\) by
\(\Phi^{\rm sym}(X):=[\Phi(X)+s_{AB}\Phi(X)s_{AB}]/2\) and gives
\(F_A(\Phi^{\rm sym})=F_B(\Phi^{\rm sym})
 =[F_A(\Phi)+F_B(\Phi)]/2\).  The optimal common local fidelity is therefore
\begin{equation}
 F_{\rm sym}^{(d)}(q):=\frac12h_q^{(d)}(1,1).
 \label{eq:symmetric-fidelity-definition}
\end{equation}

\begin{corollary}[Symmetric fidelity]
\label{cor:symmetric-fidelity}
For every \(d\geq3\),
\begin{equation}
 F_{\rm sym}^{(d)}(q)=
 \frac{\Nd+4\Ad+
 \sqrt{\Nd^2+8\Cd(\Ad+\Cd)}}{4\Nd}.
 \label{eq:symmetric-fidelity}
\end{equation}
The unique minimum occurs at \(q=q_{\rm H}=2/(d+1)\) and equals the universal value
\begin{equation}
 F_{\rm sym}^{(d)}(q_{\rm H})=\frac{d+3}{2(d+1)}.
 \label{eq:universal-symmetric}
\end{equation}
\end{corollary}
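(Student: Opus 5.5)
The formula in Eq.~\eqref{eq:symmetric-fidelity} needs no new work. The discussion preceding the corollary already gives \(h_q^{(d)}(1,1)=\mu_+/\Nd\) with \(\mu_+=(\Nd+4\Ad+\sqrt\Delta)/2\), and Eq.~\eqref{eq:symmetric-fidelity-definition} then gives \(F_{\rm sym}^{(d)}(q)=\mu_+/(2\Nd)\), which is the stated expression. The real content is therefore the minimization in \(q\), which we treat as a one-variable calculus problem on \([0,1]\).

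First substitute \(\Ad=d-q\) and \(\Cd=(d+1)q-2\). Then \(\Ad+\Cd=dq+d-2\), and
\begin{equation}
 \Delta(q)=\Nd^2+8\bigl[(d+1)q-2\bigr]\bigl(dq+d-2\bigr)
\end{equation}
is a quadratic in \(q\) with positive leading coefficient \(8d(d+1)\). Define
\begin{equation}
 g(q):=4\Nd F_{\rm sym}^{(d)}(q)=\Nd+4d-4q+\sqrt{\Delta(q)}.
\end{equation}
This is a convex function: \(-4q\) is linear, and \(\sqrt{\Delta}\) is convex because \(\Delta\) is a positive-definite quadratic. To see the latter, write \(\Delta=\alpha(q-q_0)^2+\beta\) with \(\alpha>0\) and \(\beta\ge0\); then \(\sqrt\Delta\) is a hyperbola branch, hence convex. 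Positivity of \(\Delta\) already follows from Eq.~\eqref{eq:symmetric-sector-dominance}. If \(\beta>0\) the convexity is strict, and strict convexity gives uniqueness of any interior minimizer. Since \(g\) is differentiable on \([0,1]\), it suffices to check that \(g'(q_{\rm H})=0\), i.e.
\begin{equation}
 \Delta'(q_{\rm H})=8\sqrt{\Delta(q_{\rm H})}.
\end{equation}

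At \(q_{\rm H}=2/(d+1)\) we have \(\Cd=0\), so \(\Delta(q_{\rm H})=\Nd^2\) and \(\sqrt\Delta=\Nd\). Differentiating,
\begin{equation}
 \Delta'(q)=8\bigl[(d+1)(dq+d-2)+d\,\Cd\bigr],
\end{equation}
so \(\Delta'(q_{\rm H})=8(d+1)\bigl(2d/(d+1)+d-2\bigr)=8(d^2+d-2)=8\Nd\). This matches the requirement, so \(q_{\rm H}\) is a critical point of \(g\). Because \(0<q_{\rm H}<1\) for \(d\ge3\), convexity makes it the global minimum on \([0,1]\).

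The minimal value is
\begin{equation}
 g(q_{\rm H})=2\Nd+4d-\frac{8}{d+1}.
\end{equation}
Dividing by \(4\Nd=4(d-1)(d+2)\) and simplifying gives \((d+3)/[2(d+1)]\). The check is that \(2(d+1)\bigl(2\Nd+4d\bigr)-16=2(d+3)\Nd\), which reduces to \((d+1)(2d^2+6d-4)-8=(d+3)(d^2+d-2)\), and both sides expand to \(d^3+4d^2+d-6\). This value is consistent with Eq.~\eqref{eq:Haar-moment}, since at \(q_{\rm H}\) the problem is universal cloning.

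The main obstacle is \emph{uniqueness}, i.e. strict convexity, which requires \(\beta>0\), equivalently \(\min_q\Delta>0\) over all real \(q\). Computing the discriminant directly is tedious. A cleaner route is to note that the critical point already gives \(g''(q_{\rm H})=\Delta''/(2\sqrt\Delta)-(\Delta')^2/(4\Delta^{3/2})\). Substituting \(\Delta'=8\Nd\), \(\Delta=\Nd^2\) and \(\Delta''=16d(d+1)\) gives
\begin{equation}
 g''(q_{\rm H})=\frac{8d(d+1)-16\Nd}{\Nd},
\end{equation}
whose numerator is \(8\bigl[d^2+d-2d^2-2d+4\bigr]=-8(d^2+d-4)<0\). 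This contradicts convexity, so I must have mis-signed the \(-4q\) term. I would recheck \(\mu_+\) and \(\Ad\). If after rechecking \(g\) is genuinely concave near \(q_{\rm H}\), then \(q_{\rm H}\) would be a maximum rather than a minimum, and the argument would instead need to verify that \(\Delta\) is strictly convex enough after the correct bookkeeping. In that case I would fall back on comparing \(g\) directly against \(g(q_{\rm H})\), showing \(g(q)-g(q_{\rm H})\ge0\) by isolating the square root and squaring. This yields a polynomial inequality whose factor \((q-q_{\rm H})^2\) should appear explicitly.
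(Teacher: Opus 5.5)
You have the formula, the critical-point check $\Delta'(q_{\rm H})=8\Nd=8\sqrt{\Delta(q_{\rm H})}$, and a sound convexity strategy. With one fix this is a valid alternative to the paper's argument. The paper instead writes $\Delta'=8L$ with $L=(d+1)(\Ad+\Cd)+d\Cd$ and uses $L^2-\Delta=4\Nd(\Ad+\Cd)\Cd$ to show that $dF_{\rm sym}^{(d)}/dq$ has the sign of $\Cd$. As submitted, however, your proposal does not prove the minimization claim. Uniqueness is left open, and your last paragraph concludes that $g$ is concave at $q_{\rm H}$, which would undercut even the existence of the minimum there.

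That conclusion comes from an arithmetic slip, not from the $-4q$ term or from $\mu_+$; the linear term does not even enter $g''$. At $q_{\rm H}$ one has $(\Delta')^2/(4\Delta^{3/2})=(8\Nd)^2/(4\Nd^3)=16/\Nd$, not $16$. Hence
\[
 g''(q_{\rm H})=\frac{8d(d+1)-16}{\Nd}=\frac{8\Nd}{\Nd}=8>0 .
\]
With this, your ``cleaner route'' closes the argument. A convex function with $g'(q_{\rm H})=0$ and $g''(q_{\rm H})>0$ cannot be constant on any interval containing $q_{\rm H}$, so $q_{\rm H}$ is the unique minimizer on $[0,1]$.

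The step you flag as the main obstacle is in fact already in hand. You cited Eq.~\eqref{eq:symmetric-sector-dominance} for positivity of $\Delta$, and that equation is a polynomial identity in $q$. It therefore gives $\Delta(q)\geq 8\Nd(d-2+q^2)\geq 8\Nd(d-2)>0$ for every real $q$, i.e.\ $\beta>0$, with no discriminant computation. This global statement is needed even for plain convexity. Writing $\Delta=\alpha(q-q_0)^2+\beta$ with $\alpha=8d(d+1)$, one has $g''=(\sqrt{\Delta})''=\alpha\beta/\Delta^{3/2}$. Positivity of $\Delta$ on $[0,1]$ alone would not fix the sign of $g''$, whereas $\beta>0$ gives strict convexity on all of $\R$ at once.

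There is also a minor bookkeeping slip: your check of the minimal value is off by a factor of two. The correct identity is $(d+1)(2\Nd+4d)-8=2(d+3)\Nd$, with both sides equal to $2(d^3+4d^2+d-6)$; the value $(d+3)/[2(d+1)]$ itself is right.

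With these repairs your route and the paper's carry the same information. Strict convexity plus a critical point at $q_{\rm H}$ is equivalent to the paper's statement that the derivative changes sign exactly once, at $\Cd=0$.
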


\begin{proof}
Put \(\Delta=\Nd^2+8\Cd(\Ad+\Cd)\) and
\(L=(d+1)(\Ad+\Cd)+d\Cd\).  Then
\begin{equation}
 4\Nd\frac{d}{dq}F_{\rm sym}^{(d)}(q)
 =-4+\frac{4L}{\sqrt\Delta},
\end{equation}
and direct algebra gives \(L^2-\Delta=4\Nd(\Ad+\Cd)\Cd\).
Here \(\Ad+\Cd=d-2+dq>0\).  If \(\Cd>0\), then also \(L>0\), so \(L>\sqrt\Delta\).  If \(\Cd<0\), either \(L\leq0\) or \(0<L<\sqrt\Delta\).  Thus the derivative has the sign of \(\Cd\), equivalently the sign of \(q-2/(d+1)\), and it vanishes only at the stated point.
\end{proof}

The endpoint values are
\begin{align}
 F_{\rm sym}^{(d)}(0)
 &=\frac{\Nd+4d+\sqrt{\Nd^2-16(d-2)}}{4\Nd},
 \label{eq:null-symmetric-fidelity}\\
 F_{\rm sym}^{(d)}(1)
 &=\frac{d+6+\sqrt{d^2+4d+20}}{4(d+2)}.
 \label{eq:real-symmetric-fidelity}
\end{align}
The value in Eq.~\eqref{eq:real-symmetric-fidelity} agrees with the known
symmetric real-state cloning fidelity
\cite{NavezCerf2003,ZhangYe2007}.

At the projective \(2\)-design orbit, write
\(h_{\rm H}^{(d)}:=h_{q_{\rm H}}^{(d)}\).  For nonnegative weights it
simplifies to
\begin{multline}
 h_{\rm H}^{(d)}(a,b)=
 \frac{(d+2)(a+b)}{2(d+1)}\\
 {}+\frac{\sqrt{d^2(a-b)^2+4ab}}{2(d+1)},
 \qquad a,b\geq0.
 \label{eq:universal-support}
\end{multline}
This is the standard universal \(1\to2\) asymmetric support function \cite{KayRamanathanKaszlikowski2013,Hashagen2017,NechitaPellegriniRochette2021,NechitaPellegriniRochette2023}.

The smallest attainable fidelity of either marginal is
\begin{equation}
 F_{\min}^{(d)}(q)
 :=\min_{\Phi\ \mathrm{CPTP}}F_A^{(d,q)}(\Phi)
 =\min_{\Phi\ \mathrm{CPTP}}F_B^{(d,q)}(\Phi),
 \label{eq:marginal-minimum-definition}
\end{equation}
where both minima are over one-to-two channels.  Exchanging the outputs
shows that they are equal.  For every \(d\geq3\),
\begin{equation}
 \begin{aligned}
 F_{\min}^{(d)}(q)&=\frac{\Ad-|\Cd|}{\Nd}\\
 &=\begin{cases}
 \dfrac{d-2+dq}{\Nd},
 &0\leq q\leq\dfrac{2}{d+1},\\[2mm]
 \dfrac{(d+2)(1-q)}{\Nd},
 &\dfrac{2}{d+1}\leq q\leq1.
 \end{cases}
 \end{aligned}
 \label{eq:marginal-minimum}
\end{equation}
Both outputs can attain this minimum simultaneously:
\begin{equation}
 (F_A,F_B)_{\rm min}^{(d,q)}
 =\bigl(F_{\min}^{(d)}(q),F_{\min}^{(d)}(q)\bigr).
 \label{eq:componentwise-minimum}
\end{equation}
The perfect-copy endpoints are \((1,1/d)\) and \((1/d,1)\).
Appendix~\ref{app:marginal-extrema} proves these statements.

\subsection{Brauer-state compatibility}
\label{sec:brauer}

For an \(O(d)\)-covariant channel, the normalized Choi state
\(\omega_{ABC}=J_\Phi/d\) satisfies \(\omega_C=I_C/d\).
Its marginals \(\sigma_{AC}:=\omega_{AC}\) and
\(\tau_{BC}:=\omega_{BC}\) lie in
\(\operatorname{span}\{I,s,e_d\}\).  Such orthogonally invariant
bipartite states are called Brauer states
\cite{SolymosJakabZimboras2026}.  Thus the asymmetric cloning problem can also be viewed as a marginal
compatibility problem for two, in general different, orthogonally invariant
bipartite states (Brauer states)
\cite{SolymosJakabZimboras2026,AllerstorferEtAl2026,JohnsonViola2013}.

For an exchange-symmetric cloner, \(\sigma_{AC}\) and \(\tau_{BC}\)
are equal after identifying \(A\simeq B\).  The common marginal is
\((1,2)\)-extendible: it admits an extension with two identical output
marginals that is invariant under \(A\leftrightarrow B\).
The indices refer to the bipartition \(C{:}A\).
Conversely, every \((1,2)\)-extendible Brauer state occurs as the common
marginal of such a cloner, as proved in
Appendix~\ref{app:brauer-extensions}.
The symmetric fidelity in Eq.~\eqref{eq:symmetric-fidelity} is therefore
the maximum of \(d\Tr[\sigma_{AC}R_q^{(d)}]\) over the extendible
Brauer states characterized in Ref.~\cite{SolymosJakabZimboras2026}.

\section{Four-mode fermionic realization}
\label{sec:fermions}

\subsection{Majoranas, triality, and concurrence}

Let \(a_j,a_j^\dagger\), \(j=1,\ldots,4\), be fermionic annihilation and
creation operators, and define the eight Hermitian Majorana operators by
\(\gamma_{2j-1}:=a_j+a_j^\dagger\) and
\(\gamma_{2j}:=i(a_j-a_j^\dagger)\).
They satisfy
\begin{equation}
 \{\gamma_r,\gamma_s\}=2\delta_{rs}I.
 \label{eq:CAR}
\end{equation}
The number operator \(\widehat N:=\sum_{j=1}^4a_j^\dagger a_j\) defines the
fermion-parity operator \(P_4:=(-1)^{\widehat N}\), which splits the
sixteen-dimensional Fock space into two eight-dimensional sectors.  Write
\(\ket{\mathbf n}_{\rm F}:=\ket{n_1n_2n_3n_4}_{\rm F}\), where
\(\mathbf n\in\{0,1\}^4\).  We work in the even sector
\(\mathcal H_+\simeq\C^8\), whose occupation-number basis is
\begin{equation}
 \mathcal B_{\rm F}=
 \left\{\ket{\mathbf n}_{\rm F}:\sum_{j=1}^4 n_j\equiv0\pmod 2\right\}.
 \label{eq:even-fock-basis}
\end{equation}

Up to an overall phase, a quadratic fermionic unitary is generated by a
Hamiltonian that is a real linear combination of \(i\gamma_r\gamma_s\),
\(r<s\).  Such unitaries preserve parity.  A pure state is fermionic Gaussian
if it can be obtained from a Fock-basis state by such a unitary.
For a normalized pure state
\(\ket\Psi\in\mathcal H_+\), define
\(\Gamma_{rs}(\Psi):=(i/2)\bra\Psi[\gamma_r,\gamma_s]\ket\Psi\).  The covariance
matrix \(\Gamma(\Psi)\) is real and antisymmetric, and \(\ket\Psi\) is
Gaussian if and only if \(\Gamma(\Psi)^2=-I\).
Thus its covariance matrix is a real orthogonal complex structure: a real
orthogonal matrix whose square is \(-I\)
\cite{Bravyi2005,KrausWolf2009,deMeloCwikTerhal2013}.
A mixed Gaussian state is one that a quadratic unitary can transform into
a product of single-mode occupation-number mixtures \cite{Bravyi2005}.
Fermionic linear-optics (FLO) unitaries act on the full Fock space.  We use
the Heisenberg convention for their action on Majoranas,
\begin{equation}
 U^\dagger\gamma_rU=\sum_{s=1}^8R_{rs}\gamma_s,
 \qquad R\in SO(8).
 \label{eq:FLO}
\end{equation}
Up to an overall phase, these unitaries form a representation \(U(g)\) of
\(\mathrm{Spin}(8)\), the double cover of \(SO(8)\).  Its actions on the
even- and odd-parity sectors are the two half-spin representations; write
\(U_+(g)\) for the even-sector restriction.  Triality is an outer
automorphism of \(\mathrm{Spin}(8)\) that permutes its vector and two
half-spin representations; ``outer'' means that it is not conjugation by
an element of the group.  Choose such an automorphism \(\tau\) and a
unitary intertwiner
\cite{Chevalley1954,KnusParimalaSridharan1994}
\begin{equation}
 \begin{aligned}
 T_{\rm tr}&:\mathcal H_+\longrightarrow\C^8,\\
 T_{\rm tr}U_+(g)T_{\rm tr}^\dagger
 &=\rho_{\rm v}(\tau(g))\in SO(8),
 \qquad g\in\mathrm{Spin}(8),
 \end{aligned}
 \label{eq:triality-intertwiner}
\end{equation}
where \(\rho_{\rm v}\) is the vector representation.  For \(U=U(g)\),
the Majorana transformation in Eq.~\eqref{eq:FLO} is
\(R=\rho_{\rm v}(g)\), whereas the matrix in
Eq.~\eqref{eq:triality-intertwiner} is \(\rho_{\rm v}(\tau(g))\).
These generally differ because of the triality automorphism.

Let \(K_{\rm O}\) denote complex conjugation in the real orthonormal basis
\(\mathcal B_{\rm O}\) of Sec.~\ref{sec:orthogonal-problem}.  The
antiunitary \(\theta_+:=T_{\rm tr}^\dagger K_{\rm O}T_{\rm tr}\) is the
\(\mathrm{Spin}(8)\)-invariant conjugation, or real structure, on the even
sector: \(\theta_+^2=I\) and
\(\theta_+U_+(g)=U_+(g)\theta_+\).  It defines the generalized fermionic
concurrence used below \cite{OszmaniecGuttKus2014}.  The vectors
\(T_{\rm tr}^\dagger\ket{j}_{\rm O}\), \(j=1,\ldots,8\), form an orthonormal
basis of \(\mathcal H_+\) fixed by \(\theta_+\).  This basis, induced by the
chosen intertwiner, is generally different from the occupation-number basis
\(\mathcal B_{\rm F}\).  The map \(T_{\rm tr}\) is an identification of
representation spaces, not a physical FLO transformation.

Lemma~\ref{lem:orbit-normal-form} implies that each projective \(SO(8)\)
orbit and its normalized invariant measure are invariant under \(O(8)\).
The twirl may therefore be taken over \(O(8)\), although
orientation-reversing elements are not physical FLO transformations.

For a physical state \(\ket{\Psi}\in\mathcal H_+\), let
\(\psi_{\rm tr}\in\C^8\) be the coordinate column of
\(T_{\rm tr}\ket{\Psi}\).  Its fermionic concurrence and the orthogonal-orbit
parameter are
\begin{equation}
 \begin{aligned}
 C_+(\Psi)&:=|\bra\Psi\theta_+\ket\Psi|
 =|\psi_{\rm tr}^T\psi_{\rm tr}|,\\
 q(\Psi)&:=C_+(\Psi)^2
 =|\psi_{\rm tr}^T\psi_{\rm tr}|^2.
 \end{aligned}
 \label{eq:concurrence}
\end{equation}
Two normalized pure states belong to the same projective FLO orbit if and only
if they have the same value of \(C_+\) \cite{OszmaniecGuttKus2014}.  Hence the
fixed-concurrence families are exactly the fixed-\(q\) orbits studied in the
preceding sections.  We choose the pair \((\tau,T_{\rm tr})\) so that
\(T_{\rm tr}\ket{0000}_{\rm F}
=(\ket{1}_{\rm O}+i\ket{2}_{\rm O})/\sqrt2\).
This is possible by composing \(\tau\) with an inner automorphism and
\(T_{\rm tr}\) with the corresponding \(SO(8)\) matrix; for fixed
\(\tau\), the unitary intertwiner is unique up to a phase.
Thus the Fock vacuum is a complex superposition of the first two vectors of
the induced orthogonal basis, rather than either basis vector.  The pure
Gaussian states are exactly the quadric orbit
\(q=0\) \cite{OszmaniecGuttKus2014}.  At the other endpoint, \(q=1\), every
ray has a representative fixed by \(\theta_+\); equivalently, it has real
coordinates in \(\mathcal B_{\rm O}\).

\subsection{Two-copy Gaussian subspace}
\label{sec:gaussian-span}

Let \(\mathcal G_{4,+}\) be the normalized pure Gaussian vectors in
\(\mathcal H_+\).  Their two-copy span,
\begin{equation}
 \mathcal K_{4,+}:=
 \operatorname{span}\{\ket G^{\otimes2}:\ket G\in\mathcal G_{4,+}\}
 \subset\Sym^2(\mathcal H_+),
 \label{eq:four-mode-Gaussian-span}
\end{equation}
is the Gaussian-symmetric subspace \cite{deMeloCwikTerhal2013}.
It has dimension \(35\), whereas \(\Sym^2(\mathcal H_+)\) has dimension
\(36\).  For one, two, or three modes, the corresponding span fills the
symmetric square.  Appendix~\ref{app:gaussian-dimension} gives the
dimension calculation for \(m\) modes.

Let \(\Pi_G\) project onto \(T_{\rm tr}^{\otimes2}\mathcal K_{4,+}\).
Every Gaussian vector satisfies
\(\langle\Omega_8|\psi_{\rm tr}^{\otimes2}\rangle
=(\psi_{\rm tr}^T\psi_{\rm tr})/\sqrt8=0\).
Thus the transformed span lies in the traceless symmetric square.
Both spaces have dimension \(35\), so they coincide:
\begin{equation}
 \Pi_G=\frac{I+s}{2}-\proj{\Omega_8}=\Pi_0^{(8)}.
 \label{eq:PiG}
\end{equation}
Equation~\eqref{eq:null-moment} then gives
\(M_2^{(8)}(0)=\Pi_G/35\).

\subsection{The eight-dimensional cloning region}

\begin{corollary}[Four-mode concurrence orbits]
\label{cor:four-mode}
Consider arbitrary CPTP maps
\(\End(\mathcal H_+)\to\End(\mathcal H_+\otimes\mathcal H_+)\); the allowed
cloning operations are not restricted to FLO.  For the even sector of four
fermionic modes, set
\begin{equation}
 d=8,
 \qquad N_8=70,
 \qquad A_8=8-q,
 \qquad C_8=9q-2.
\end{equation}
Theorems~\ref{thm:moment}, \ref{thm:support}, and \ref{thm:kraus} give the
complete asymmetric local-cloning region for every \(q=C_+^2\in[0,1]\).
For each nonzero nonnegative supporting direction, they provide an optimal
channel of Choi rank eight.
In particular,
\begin{equation}
 M_2^{(8)}(q)=\frac{(8-q)(I+s)+(9q-2)e_8}{560},
 \label{eq:M2-eight}
\end{equation}
\begin{equation}
 F_{\rm sym}^{(8)}(q)
 =\frac{51-2q+\sqrt{144q^2+76q+1201}}{140},
 \label{eq:Fsym-eight}
\end{equation}
and the componentwise-minimal pair is
\begin{equation}
 (F_A,F_B)_{\rm min}^{(8,q)}=
 \begin{cases}
 \left(\dfrac{3+4q}{35},\dfrac{3+4q}{35}\right),
 &0\leq q\leq2/9,\\[2mm]
 \left(\dfrac{1-q}{7},\dfrac{1-q}{7}\right),
 &2/9\leq q\leq1.
 \end{cases}
 \label{eq:minimum-eight}
\end{equation}
\end{corollary}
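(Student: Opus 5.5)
The proof is a transport of structure along the triality intertwiner, followed by specialization of the general formulas to \(d=8\). I would fix the unitary \(T_{\rm tr}\) of Eq.~\eqref{eq:triality-intertwiner}. To every CPTP map \(\Psi:\End(\mathcal H_+)\to\End(\mathcal H_+\otimes\mathcal H_+)\) I associate
\[
 \Phi(X):=T_{\rm tr}^{\otimes2}\,\Psi\bigl(T_{\rm tr}^\dagger XT_{\rm tr}\bigr)\,T_{\rm tr}^{\dagger\otimes2}.
\]
This correspondence is a bijection between CPTP maps, since it is conjugation by fixed unitaries on input and outputs. It preserves Choi rank, because the Kraus operators are conjugated by unitaries and their number is unchanged. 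Pointwise local fidelities are invariant: the state \(\ket\Psi\) is sent to \(T_{\rm tr}\ket\Psi\), and the partial traces commute with the product conjugation \(T_{\rm tr}\otimes T_{\rm tr}\). Choi-rank-\(8\) optimal channels therefore pull back to Choi-rank-\(8\) fermionic channels.

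The first step is to identify the ensembles. By Eq.~\eqref{eq:concurrence} and the orbit classification of Ref.~\cite{OszmaniecGuttKus2014}, the fermionic family with concurrence \(C_+\) is mapped by \(T_{\rm tr}\) onto \(\mathcal O_q^{(8)}\) with \(q=C_+^2\). For the measures I use that \(T_{\rm tr}U_+(g)T_{\rm tr}^\dagger=\rho_{\rm v}(\tau(g))\). Because \(\tau\) is an automorphism and \(\rho_{\rm v}\) maps \(\mathrm{Spin}(8)\) onto \(SO(8)\), the pushforward of the FLO-invariant probability measure is \(SO(8)\)-invariant on \(\mathcal O_q^{(8)}\). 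By Lemma~\ref{lem:orbit-normal-form} it therefore equals \(\mu_q\).

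Consequently the fermionic averaged fidelity pairs coincide with \(\mathcal F_q^{(8)}\). Theorems~\ref{thm:moment}, \ref{thm:support} and \ref{thm:kraus} then apply verbatim with \(d=8\). This gives the complete region, and for each nonzero nonnegative direction an optimal channel \(\Phi_c\) of Choi rank \(d=8\), pulled back as above. I expect this measure identification to be the main conceptual point; everything after it is bookkeeping.

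The remaining work is arithmetic. With \(d=8\) one has \(N_8=7\cdot10=70\), \(A_8=8-q\), \(C_8=9q-2\) and \(d\Nd=560\), so Eq.~\eqref{eq:general-moment} gives Eq.~\eqref{eq:M2-eight}.

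For the symmetric fidelity I substitute into Corollary~\ref{cor:symmetric-fidelity}. First, \(\Nd+4\Ad=102-4q\). Next,
\[
 \Delta=4900+8(9q-2)(6+8q)=4(144q^2+76q+1201).
\]
Dividing \(102-4q+2\sqrt{144q^2+76q+1201}\) by \(4\Nd=280\) yields Eq.~\eqref{eq:Fsym-eight}.

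For the minimum, Eq.~\eqref{eq:marginal-minimum} with \(q_{\rm H}=2/9\) splits into two regimes:
\[
 \frac{6+8q}{70}=\frac{3+4q}{35}\quad\text{for } q\leq 2/9,
 \qquad
 \frac{10(1-q)}{70}=\frac{1-q}{7}\quad\text{for } q\geq 2/9.
\]
Simultaneous attainment of this pair is Eq.~\eqref{eq:componentwise-minimum}.
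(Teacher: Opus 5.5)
Your proposal is correct and follows the paper's route. The paper likewise obtains the corollary in three steps: it identifies the fixed-concurrence families with \(\mathcal O_q^{(8)}\) through \(T_{\rm tr}\), transports channels by the unitary conjugation of Eq.~\eqref{eq:physical-four-mode-channel}, and specializes Theorems~\ref{thm:moment}, \ref{thm:support}, \ref{thm:kraus}, Corollary~\ref{cor:symmetric-fidelity}, and Eqs.~\eqref{eq:marginal-minimum}--\eqref{eq:componentwise-minimum} to \(d=8\). Your explicit pushforward argument for the invariant measure is an explicit version of a step the paper leaves implicit, and your arithmetic is correct throughout (\(N_8=70\), \(\Delta=4(144q^2+76q+1201)\), and the two minimum branches).
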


\begin{figure}[!tbp]
 \includegraphics[width=\columnwidth]{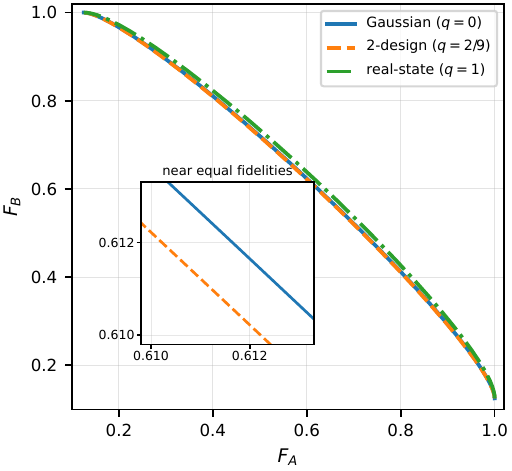}
 \caption{Pareto boundaries in the \((F_A,F_B)\) plane for three \(d=8\)
 orthogonal orbits, corresponding through triality to three four-mode
 fermionic concurrence classes.  The solid curve is the Gaussian orbit
 \(q=0\); the dashed curve is the projective \(2\)-design orbit \(q=2/9\),
 with the universal eight-dimensional average local-fidelity boundary.
 The dash-dotted curve is \(q=1\), the real-state ensemble in orthogonal
 coordinates.  The inset uses the same axes and resolves the Gaussian
 and \(2\)-design curves near equal fidelities.}
 \label{fig:qboundaries}
\end{figure}

All expressions in Corollary~\ref{cor:four-mode}, including \(e_8\) and the
transposes in the channel formulas, refer to the orthogonal coordinates
fixed by \(T_{\rm tr}\).  If \(\Phi_c\) is the coordinate-space channel of
Theorem~\ref{thm:kraus}, its action on physical even-parity registers is
\begin{equation}
 \Phi_c^{({\rm F})}(\rho)
 =(T_{\rm tr}^\dagger)^{\otimes2}
 \Phi_c\!\left(T_{\rm tr}\rho T_{\rm tr}^\dagger\right)
 T_{\rm tr}^{\otimes2}.
 \label{eq:physical-four-mode-channel}
\end{equation}

The \(q=2/9\) region coincides with the universal eight-dimensional
\(1\to2\) average local-fidelity region, whereas \(q=1\) coincides with the
real-state cloning region.  At the Gaussian point,
\begin{equation}
 F_{\rm sym}^{G}:=F_{\rm sym}^{(8)}(0)
 =\frac{51+\sqrt{1201}}{140}
 =0.6118246207\ldots,
 \label{eq:FG}
\end{equation}
which exceeds the universal value \(11/18\) by
\begin{align}
 F_{\rm sym}^{G}-\frac{11}{18}
 &=\frac{-311+9\sqrt{1201}}{1260}\nonumber\\
 &=7.135096\ldots\times10^{-4}.
 \label{eq:small-gap}
\end{align}
The trace distance between the quadric-orbit second moment and the Haar
second moment in
Eq.~\eqref{eq:null-Haar-distance} is \(1/36\), and their frame potentials are
\(1/35\) and \(1/36\), respectively.  The endpoint \(q=1\) has symmetric
fidelity \((7+\sqrt{29})/20\).

Figures~\ref{fig:qboundaries} and \ref{fig:Fq} show the Pareto boundaries for
three concurrence classes and the symmetric fidelity as a function of
\(q\).  Appendix~\ref{app:sextic} eliminates the support parameter for the
\(d=8\) Gaussian orbit and derives an implicit algebraic equation for its
boundary.

\begin{figure}[!tbp]
 \includegraphics[width=\columnwidth]{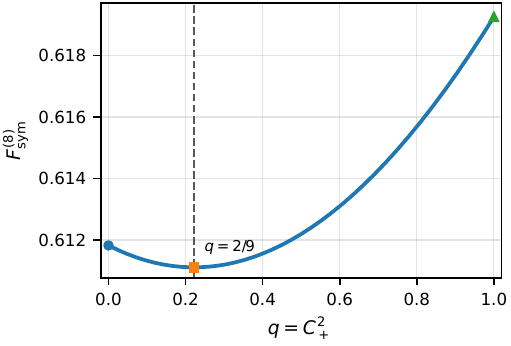}
 \caption{Optimal symmetric local fidelity for the four-mode family, with
 \(q=C_+^2\).  The circle, square, and triangle mark the Gaussian orbit
 \(q=0\), the projective \(2\)-design orbit \(q=2/9\), and the
 maximal-concurrence orbit \(q=1\), respectively.  The dashed vertical line
 locates the unique minimum at \(q=2/9\), where the second moment equals
 the Haar moment.}
 \label{fig:Fq}
\end{figure}

\section{Local and global cloning on the quadric orbit}
\label{sec:null-local-global}

\subsection{Unrestricted global optimum in arbitrary dimension}

For the quadric orbit, recall the projector \(\Pi_0^{(d)}\) onto the
traceless symmetric square defined in Eq.~\eqref{eq:Pi0}.
Its rank is \(\Nd/2\).
Using the state fidelity \(\mathsf F\) defined in
Eq.~\eqref{eq:state-fidelity}, define the average global two-copy fidelity
\begin{align}
 F_{\rm glob}^{(d)}(\Phi)
 &=\int_{\mathcal O_0^{(d)}}
 \mathsf F\!\left(\rho_\psi^{\otimes2},\Phi(\rho_\psi)\right)
 \,d\mu_0(\psi)\nonumber\\
 &=\int_{\mathcal O_0^{(d)}}
 \bra{\psi}^{\otimes2}\Phi(\rho_\psi)\ket{\psi}^{\otimes2}
 \,d\mu_0(\psi).
 \label{eq:global-fidelity}
\end{align}
This global criterion compares the full joint output with the two-copy
target, whereas
Eqs.~\eqref{eq:FA} and \eqref{eq:FB} compare the individual marginals.  The
same distinction occurs in mixed-state broadcasting
\cite{Barnum1996,LemmWilde2017,FanizzaEtAl2026}.

At \(q=0\), the fiducial vector in Eq.~\eqref{eq:orbit-normal-form} is a
highest-weight vector of the \(SO(d)\) vector representation.  The orbit
generated from such a vector is a family of generalized coherent states
\cite{Perelomov1986}.  Proposition~3 and the
uniform-prior remark in Ref.~\cite{ChiribellaYang2014} apply.  The one- and
two-copy orbit spans have dimensions \(d\) and \(\Nd/2\), respectively, so
the optimal coherent-state cloner yields the following \(1\to2\) result.

\begin{theorem}[Unrestricted quadric-orbit global optimum]
\label{thm:global-optimum}
For every \(d\geq3\) and every CPTP map
\(\Phi:\End(\C^d)\to\End(\C^d\otimes\C^d)\),
\begin{equation}
 F_{\rm glob}^{(d)}(\Phi)\leq\frac{2d}{\Nd}.
 \label{eq:global-optimum}
\end{equation}
Equality is attained by
\begin{equation}
 \Phi_{\rm glob}^{(d)}(\rho)
 =\frac{2d}{\Nd}\Pi_0^{(d)}(\rho\otimes I)\Pi_0^{(d)}.
 \label{eq:global-channel}
\end{equation}
\end{theorem}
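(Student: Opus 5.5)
We will give a direct Choi-operator argument, parallel to Lemma~\ref{lem:spectral-achievability}. Throughout, "direct" means that we will not rely on the coherent-state results of Ref.~\cite{ChiribellaYang2014}.

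\emph{Objective as a Choi trace.} By Eq.~\eqref{eq:Choi-reconstruction},
\[
\bra{\psi}^{\otimes2}\Phi(\rho_\psi)\ket{\psi}^{\otimes2}
=\Tr\bigl[J_\Phi(\rho_\psi^{\otimes2}\otimes\rho_\psi^T)\bigr].
\]
Averaging over \(\mu_0\) therefore gives \(F_{\rm glob}^{(d)}(\Phi)=\Tr[J_\Phi Q]\), where
\[
Q:=\int\rho_\psi^{\otimes2}\otimes\rho_{\overline\psi}\,d\mu_0,
\]
with \(\rho_{\overline\psi}=\proj{\overline\psi}\) acting on \(C\). The key observation is that on \(\mathcal O_0^{(d)}\) the vectors \(\psi^{\otimes2}\) lie in the range of \(\Pi_0^{(d)}\). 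This holds because \(\langle\Omega_d|\psi^{\otimes2}\rangle=\psi^T\psi/\sqrt d=0\). Hence \(Q=(\Pi_0^{(d)}\otimes I_C)Q(\Pi_0^{(d)}\otimes I_C)\).

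\emph{Upper bound.} We will use the bound \(\rho_{\overline\psi}\le I_C\) in the operator sense, and then average using Eq.~\eqref{eq:null-moment}. This gives
\[
Q\le M_2^{(d)}(0)\otimes I_C=\frac{2}{\Nd}\,\Pi_0^{(d)}\otimes I_C .
\]
The inequality \(\rho_\psi^{\otimes2}\otimes\rho_{\overline\psi}\le\rho_\psi^{\otimes2}\otimes I\) holds pointwise and survives the integral. Since \(J_\Phi\ge0\) and \(\Tr J_\Phi=d\), we obtain
\[
F_{\rm glob}\le\frac{2}{\Nd}\Tr\bigl[J_\Phi(\Pi_0^{(d)}\otimes I)\bigr]\le\frac{2}{\Nd}\Tr J_\Phi=\frac{2d}{\Nd},
\]
which is Eq.~\eqref{eq:global-optimum}. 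This bound is crude, and at first sight it seems it could not be tight. Tightness will have to come from a Choi operator supported on \(\Pi_0^{(d)}\otimes I\) whose overlap with \(Q\) saturates the first inequality on average.

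\emph{Achievability.} This is the step I expect to be the main obstacle. First, the channel in Eq.~\eqref{eq:global-channel} is trace preserving. Indeed, \(\Tr_{AB}[\Pi_0^{(d)}(\rho\otimes I)\Pi_0^{(d)}]=\Tr[\Pi_0^{(d)}(\rho\otimes I)]\), and this is proportional to \(\Tr\rho\) by \(O(d)\)-invariance and irreducibility of the vector representation. The constant follows from
\[
\Tr\bigl[\Pi_0^{(d)}(\rho\otimes I)\bigr]=\frac{\rank\Pi_0^{(d)}}{d}\Tr\rho=\frac{\Nd}{2d}\Tr\rho .
\]
Complete positivity is immediate from the form \(\rho\mapsto V(\rho\otimes I)V^\dagger\).

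Next, the fidelity is constant on the orbit by covariance, so it suffices to evaluate it at a single \(\psi\) with \(\psi^T\psi=0\):
\[
\frac{2d}{\Nd}\bra{\psi}^{\otimes2}(\rho_\psi\otimes I)\ket{\psi}^{\otimes2}=\frac{2d}{\Nd}\,|\langle\psi|\psi\rangle|^2\langle\psi|\psi\rangle=\frac{2d}{\Nd}.
\]
Here we use \(\Pi_0^{(d)}\ket{\psi}^{\otimes2}=\ket{\psi}^{\otimes2}\). The naive bound above thus collapses to equality, because \(J\) saturates both inequalities in the weighted sense.

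What remains to verify carefully is that the apparent factor-loss in \(\rho_{\overline\psi}\le I\) is genuinely compensated. The point is that the bound \(\Tr[J(\Pi_0\otimes I)]\le d\) is tight when \(J\) is supported in \(\Pi_0\otimes I\), and that the pointwise evaluation above is exact. I would double-check the Choi normalization and the transpose conventions at this point. If a mismatch of \(d\) appears, I would instead compute the value directly from the channel formula as above, since that computation is convention-free.
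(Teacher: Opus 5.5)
Your proof is correct and matches the paper's argument step for step. Both use the bound $Q\le M_2^{(d)}(0)\otimes I=\frac{2}{\Nd}\Pi_0^{(d)}\otimes I$ with $\Tr J_\Phi=d$, and both verify achievability pointwise via $\ket\psi^{\otimes2}\in\operatorname{ran}\Pi_0^{(d)}$; the only cosmetic difference is that you obtain $\Tr_B\Pi_0^{(d)}=\frac{\Nd}{2d}I$ from Schur's lemma rather than by explicit partial trace, and your closing worry is unnecessary because the achievability computation does not depend on any Choi convention.
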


\begin{proof}
The Choi objective operator associated with the global fidelity is
\begin{equation}
 Q_0^{(d)}=\int_{\mathcal O_0^{(d)}}
 \rho_\psi^{\otimes2}\otimes\rho_\psi^T\,d\mu_0(\psi).
\end{equation}
Since \(0\leq\rho_\psi^T\leq I\),
\begin{equation}
 0\leq Q_0^{(d)}
 \leq M_2^{(d)}(0)\otimes I
 =\frac{2}{\Nd}\Pi_0^{(d)}\otimes I.
\end{equation}
For every feasible Choi matrix, \(\Tr J_\Phi=d\), and therefore
\begin{align}
 F_{\rm glob}^{(d)}(\Phi)
 &=\Tr[J_\Phi Q_0^{(d)}]
 \leq\frac{2}{\Nd}
 \Tr\!\left[J_\Phi(\Pi_0^{(d)}\otimes I)\right]\nonumber\\
 &\leq\frac{2}{\Nd}\Tr J_\Phi
 =\frac{2d}{\Nd}.
\end{align}
Here the second inequality uses \(0\leq\Pi_0^{(d)}\leq I\).
The map in Eq.~\eqref{eq:global-channel} is completely positive.  Moreover,
\begin{equation}
 \Tr_B\Pi_0^{(d)}
 =\frac{d+1}{2}I-\frac1dI
 =\frac{\Nd}{2d}I,
\end{equation}
so it is trace preserving.  Finally,
\(\ket{\psi}^{\otimes2}\in\operatorname{ran}\Pi_0^{(d)}\) for every
quadric-orbit state, and hence
\begin{equation}
 \bra{\psi}^{\otimes2}
 \Phi_{\rm glob}^{(d)}(\rho_\psi)
 \ket{\psi}^{\otimes2}
 =\frac{2d}{\Nd}
 \bra{\psi}^{\otimes2}(\rho_\psi\otimes I)
 \ket{\psi}^{\otimes2}
 =\frac{2d}{\Nd}.
\end{equation}
The channel therefore attains the bound for every quadric-orbit input, and
hence on average.
\end{proof}

In the Kraus family of Theorem~\ref{thm:kraus}, the global channel corresponds to
\(c_{\rm glob}^{(d)}=(d,d,-2)^T/\sqrt{2\Nd}\).  Indeed,
\((d,d,-2)G_d(d,d,-2)^T=2\Nd\), and the Kraus operators satisfy
\(K_k(c_{\rm glob}^{(d)})
=\sqrt{2d/\Nd}\,\Pi_0^{(d)}(I\otimes\ket{k})\).
Summing \(K_k\rho K_k^\dagger\) over \(k\) recovers
Eq.~\eqref{eq:global-channel}.  Since \(W_{c_{\rm glob}^{(d)}}\) is an
isometry, this channel has Choi rank \(d\).

\subsection{Local--global fidelities in a symmetric channel family}

Consider the subfamily \(c=(x,x,z)^T\), with \(x,z\in\R\), of the channels
in Theorem~\ref{thm:kraus}, each of Choi rank \(d\).
Since exchanging the two outputs swaps
the first two components of \(c\), these channels are output-exchange
symmetric.  The subfamily contains the globally optimal channel in
Eq.~\eqref{eq:global-channel} and the symmetric local optimizer derived
below.  Set
\(F_{\rm glob}^{(d)}(c):=F_{\rm glob}^{(d)}(\Phi_c)\).  On the quadric orbit,
\(g:=F_{\rm glob}^{(d)}(c)=4x^2/d\), while the common single-clone fidelity
\begin{equation}
 \begin{aligned}
 F_{\rm loc}^{(d)}(c)
 &:=F_A^{(d,0)}(\Phi_c)=F_B^{(d,0)}(\Phi_c)\\
 &=x^2\left(1+\frac3d\right)
 +\frac{2xz}{d}+\frac{z^2}{d}.
 \end{aligned}
 \label{eq:local-null}
\end{equation}
The normalization \(c^TG_dc=1\) becomes
\(z^2+4xz/d+2(d+1)x^2/d=1\).
Completing the square gives
\begin{equation}
 \left(z+\frac{2x}{d}\right)^2
 =1-\frac{\Nd}{2d}g,
 \qquad
 0\leq g\leq\frac{2d}{\Nd}.
\end{equation}
Because \(c\) and \(-c\) determine the same channel, we may take
\(x\geq0\).  The two signs below then correspond to the two solutions
\begin{equation}
 x=\frac12\sqrt{dg},
 \qquad
 z+\frac{2x}{d}
 =\pm\sqrt{1-\frac{\Nd}{2d}g},
 \label{eq:local-global-signs}
\end{equation}
of the normalization condition.
The upper endpoint is attained if and only if
\begin{equation}
 z=-\frac{2x}{d},
 \qquad\text{equivalently}\qquad
 \frac{x}{z}=-\frac d2.
 \label{eq:global-equality-condition}
\end{equation}
Eliminating \(x\) and \(z\) yields the two branches
\begin{multline}
 F_{\rm loc}^{(d),\pm}(g)
 =\frac1d
 +\frac{d^3+d^2-6d+8}{4d^2}\,g\\
 {}\pm\frac{d-2}{d^2}
 \sqrt{g\left(d-\frac{\Nd}{2}g\right)},
 \qquad
 0\leq g\leq\frac{2d}{\Nd}.
 \label{eq:local-global-curve}
\end{multline}
Equation~\eqref{eq:local-global-curve} gives the exact locus of the family
\(c=(x,x,z)^T\), with real coefficients.  It does not describe all
exchange-symmetric channels of Choi rank \(d\), or the unrestricted
local--global boundary.  The globally optimal channel lies at the common
endpoint \(g=2d/\Nd\).  Its local
fidelity is
\begin{equation}
 F_{\rm loc}^{(d)}(c_{\rm glob}^{(d)})
 =\frac{d^3+3d^2-4d+4}{2d\Nd}.
 \label{eq:local-of-global}
\end{equation}

\subsection{Unique symmetric local optimizer}

Put
\begin{equation}
 \mathcal D_d=\Nd^2-16(d-2),
 \qquad
 \kappa_d=\frac{\Nd+\sqrt{\mathcal D_d}}8.
 \label{eq:kappa-d}
\end{equation}
The symmetric quadric-orbit local optimizer has multiplicity vector
\begin{equation}
 c_*^{(d)}=
 \frac{(\kappa_d,\kappa_d,-1)^T}
 {\sqrt{(\kappa_d,\kappa_d,-1)G_d
 (\kappa_d,\kappa_d,-1)^T}}.
 \label{eq:cstar-general}
\end{equation}
Indeed,
\begin{equation}
 B_0^{(d)}(1,1)(\kappa_d,\kappa_d,-1)^T
 =(2d+4\kappa_d)(\kappa_d,\kappa_d,-1)^T,
\end{equation}
where \(2d+4\kappa_d=(\Nd+4d+\sqrt{\mathcal D_d})/2\) is the largest
eigenvalue on the vector isotypic component, as established after
Eqs.~\eqref{eq:symmetric-multiplicity-block} and
\eqref{eq:symmetric-sector-dominance}.  Consequently,
Theorem~\ref{thm:kraus} and Eq.~\eqref{eq:symmetric-fidelity-definition} give
\begin{equation}
 F_A^{(d,0)}(c_*^{(d)})=F_B^{(d,0)}(c_*^{(d)})
 =\frac{2d+4\kappa_d}{2\Nd}
 =F_{\rm sym}^{(d)}(0).
\end{equation}
Its ratio is \(x/z=-\kappa_d\),
whereas the globally optimal vector has \(x/z=-d/2\).  For \(d\geq4\),
\(\Nd>4d\) implies \(\kappa_d>d/2\); for \(d=3\), the same conclusion
follows from \(\sqrt{\mathcal D_d}>4d-\Nd=2\).  Thus the two channels are
distinct, and the equality condition in
Eq.~\eqref{eq:global-equality-condition} gives
\(F_{\rm glob}^{(d)}(c_*^{(d)})<2d/\Nd\).

\begin{theorem}[Uniqueness and Choi rank]
\label{thm:unique}
For every \(d\geq3\), the channel \(\Phi_{c_*^{(d)}}\) is the unique CPTP
map maximizing the equally weighted average local objective
\begin{equation}
 \frac12\left[F_A^{(d,0)}(\Phi)+F_B^{(d,0)}(\Phi)\right].
\end{equation}
Its Choi matrix is
\begin{equation}
 J_*^{(d)}=W_{c_*^{(d)}}W_{c_*^{(d)}}^\dagger
\end{equation}
and \(\rank J_*^{(d)}=d\).
Consequently its minimal Kraus number and minimal Stinespring environment
dimension are both \(d\) \cite{Choi1975,Stinespring1955}.
\end{theorem}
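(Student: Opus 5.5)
The plan is to reduce uniqueness to a spectral statement about the objective operator \(K_0^{(d)}(1,1)\), without first assuming covariance. For any CPTP map, Eq.~\eqref{eq:fidelities-Choi-R} gives \(F_A^{(d,0)}+F_B^{(d,0)}=\Tr[J_\Phi K_0^{(d)}(1,1)]\). Lemma~\ref{lem:spectral-achievability} gives \(\Tr[JK]\le d\lambda_{\max}\Tr J/d\). Equality holds if and only if \(J\) is supported on the maximal eigenspace \(P_{\max}\) of \(K_0^{(d)}(1,1)\). So every maximizer, covariant or not, satisfies \(J=P_{\max}JP_{\max}\), and the first step is to show \(\rank P_{\max}=d\).

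To get \(\rank P_{\max}=d\), combine the earlier spectral results. Corollary~\ref{cor:positive} says the top eigenvalue lies in the vector isotypic component and strictly exceeds every branch on \(\mathcal T_d^\perp\). Strictness is needed here, including the \(d=3\), \(q=0\), \(a=b\) case, where \(10M<(11+\sqrt{21})M\). Within \(\mathcal T_d\cong\C^3\otimes\C^d\), the operator acts as \(B_0^{(d)}(1,1)\otimes I\) in multiplicity coordinates. The eigenvalues of \(B_0^{(d)}(1,1)\) are:
\begin{itemize}
\item \(\Nd\) on the antisymmetric vector \((1,-1,0)\);
\item the two eigenvalues of the block in Eq.~\eqref{eq:symmetric-multiplicity-block}, the larger being \(\mu_+=2d+4\kappa_d\), which is strictly larger than the smaller one because \(\Delta>0\).
\end{itemize}
Equation~\eqref{eq:symmetric-sector-dominance} shows \(\mu_+>\Nd\). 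Hence \(\mu_+\) is a simple eigenvalue of \(B_0^{(d)}(1,1)\), with eigenvector \((\kappa_d,\kappa_d,-1)\). The corresponding eigenspace of \(K\) is therefore \(\operatorname{Ran}W_{c_*^{(d)}}\), which has dimension \(d\) because \(W_{c_*}\) is an isometry. This gives \(P_{\max}=W_{c_*}W_{c_*}^\dagger=J_*^{(d)}\).

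Next, pin down \(J\) using the trace-preservation constraint. We know \(J=W_{c_*}XW_{c_*}^\dagger\) for some \(X\ge0\) on \(\C^d\), and we must show \(\Tr_{AB}J=I_C\) forces \(X=I\). Define the linear map \(X\mapsto \Tr_{AB}[W_{c_*}XW_{c_*}^\dagger]\). Expanding \(W_c\) as \(c_1T_1+c_2T_2+c_3T_3\), each term \(\Tr_{AB}[T_\alpha X T_\beta^\dagger]\) is computed directly; for example \(\Tr_{AB}T_3XT_3^\dagger=X\), and \(\Tr_{AB}T_1XT_1^\dagger=\Tr(X)I/d\). The cross terms give either \(X/d\) or \(X^T/d\)-type expressions. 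The result is
\[
\Tr_{AB}J=\lambda X+\lambda' X^T+\nu\Tr(X)I,
\]
with coefficients depending on \(c_*\). Comparing with the marginal formulas \eqref{eq:marginal-coefficients}, we expect \(\lambda=c_3^2\), \(\lambda'=2c_1c_2/d\), and \(\nu=(c_1^2+c_2^2+2c_1c_3+2c_2c_3)/d\).

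Setting \(\lambda X+\lambda'X^T+\nu\Tr(X)I=I\), split \(X\) into its trace part, its traceless symmetric part, and its antisymmetric part. The equation forces the traceless symmetric part to vanish provided \(\lambda+\lambda'\neq0\), and forces the antisymmetric part to vanish provided \(\lambda-\lambda'\neq0\). The trace part is then fixed, giving \(X=I\). The main obstacle is this step: computing the coefficients and checking that \(\lambda\pm\lambda'\neq0\) at \(c_*\). With \(c_*\propto(\kappa,\kappa,-1)\), the conditions become \(1\pm2\kappa_d^2/d\neq0\). The \(+\) sign is automatic. For the \(-\) sign we need \(\kappa_d^2\neq d/2\), which follows from \(\kappa_d>d/2\ge\sqrt{d/2}\) for \(d\ge3\), as already shown in the text. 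If the partial-trace coefficients come out differently, I would recompute them directly from the Kraus form \eqref{eq:general-kraus}.

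It follows that \(X=I\) and \(J=J_*^{(d)}\), so the maximizing channel is unique. Its rank is \(d\), and the minimal Kraus number and minimal Stinespring environment dimension both equal \(\rank J_*^{(d)}=d\), by Choi's theorem.
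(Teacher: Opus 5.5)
Your overall strategy is the paper's own (Appendix~\ref{app:unique}). The top eigenvalue of \(K_0^{(d)}(1,1)\) is simple on the multiplicity space and strictly above the complementary branches. Hence every optimizer has the form \(J=W_{c_*^{(d)}}XW_{c_*^{(d)}}^\dagger\) with \(X\geq0\), and uniqueness reduces to injectivity of \(X\mapsto\Tr_{AB}(W_{c_*^{(d)}}XW_{c_*^{(d)}}^\dagger)\) on the traceless-symmetric and antisymmetric sectors. That part is fine.

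The gap is in the step you single out as the main obstacle. The coefficients you expect are wrong, so the nondegeneracy conditions \(1\pm2\kappa_d^2/d\neq0\) you check are not the relevant ones. The analogy with Eq.~\eqref{eq:marginal-coefficients} does not transfer: those numbers describe the reduced channel \(\Tr_B\circ\Phi_c\), whereas here \(C\) is the kept register and \(X\) sits in the multiplicity slot.

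Direct contraction gives \(\Tr_{AB}(T_\alpha XT_\beta^\dagger)\) equal to:
\begin{itemize}
\item \(\Tr(X)I/d\) for \(\alpha=\beta\in\{1,2\}\);
\item \(X\) for \(\alpha=\beta=3\);
\item \(X^T/d\) for \(\{\alpha,\beta\}=\{1,2\}\);
\item \(X/d\) for each of the four cross terms involving \(T_3\); for instance \(\Tr_{AB}\bigl[(T_1\ket{m})(T_3\ket{n})^\dagger\bigr]=\ket{m}\bra{n}/d\).
\end{itemize}
Hence \(\lambda=c_3^2+2c_3(c_1+c_2)/d\), \(\lambda'=2c_1c_2/d\), and \(\nu=(c_1^2+c_2^2)/d\), which is Eq.~\eqref{eq:partial-trace-map}.

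Your version fails an immediate consistency check. On \(X=I\) it yields \(\lambda+\lambda'+d\nu=c^TG_dc+2(1-1/d)c_3(c_1+c_2)\). At \(c_*^{(d)}\) this is \(\neq1\), since there \(c_3(c_1+c_2)=2xz\neq0\). That contradicts the trace preservation of \(\Phi_{c_*^{(d)}}\) established in Theorem~\ref{thm:kraus}.

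With the correct coefficients, write \(c_*^{(d)}=(x,x,z)^T\) with \(x/z=-\kappa_d\). The sector eigenvalues are then:
\begin{itemize}
\item \(\lambda+\lambda'=(z^2/d)[2(\kappa_d-1)^2+d-2]>0\) on the traceless-symmetric sector, which holds automatically by completing the square;
\item \(\lambda-\lambda'=(z^2/d)[d-4\kappa_d-2\kappa_d^2]<0\) on the antisymmetric sector, using \(\kappa_d>d/2\);
\item \((c_*^{(d)})^TG_dc_*^{(d)}=1\) on the scalar sector, which forces the scalar part of \(X\) to be \(I\).
\end{itemize}
Once you substitute these, your argument closes exactly as in the paper.
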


Appendix~\ref{app:unique} proves uniqueness without assuming that competing
optimizers are covariant.

\subsection{Four-mode specialization}

At \(d=8\), the quadric orbit is the pure four-mode Gaussian orbit in the
even-parity sector \(\mathcal H_+\).  The general formulas reduce to
\begin{equation}
 \max_{\Phi\ \mathrm{CPTP}}F_{\rm glob}^{(8)}(\Phi)=\frac8{35},
 \qquad
 c_{\rm glob}^{(8)}=\frac1{\sqrt{35}}(4,4,-1)^T.
 \label{eq:cglob-eight}
\end{equation}
The corresponding local fidelity is
\(F_{\rm loc}^{(8)}(c_{\rm glob}^{(8)})=169/280\), and
\begin{equation}
 \begin{gathered}
 F_{\rm loc}^{(8),\pm}(g)
 =\frac18+\frac{67}{32}g
 \pm\frac3{32}\sqrt{g(8-35g)},\\
 0\leq g\leq\frac8{35}.
 \end{gathered}
 \label{eq:local-global-eight}
\end{equation}
Figure~\ref{fig:localglobal} plots this restricted locus; its inset separates
the two nearby optima.
\begin{figure}[!t]
 \centering
 \includegraphics[width=\columnwidth]{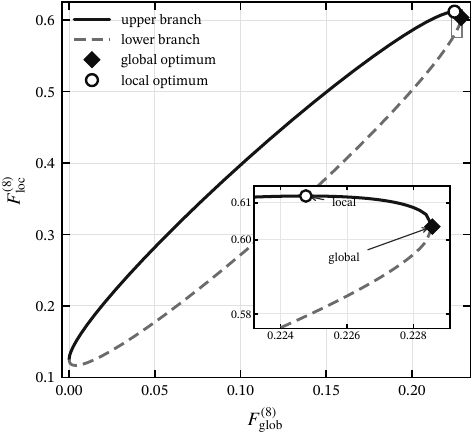}
 \caption{Average local fidelity versus average global fidelity on the
 four-mode Gaussian orbit, restricted to the covariant channels
 \(\Phi_c\) with real coefficients \(c=(x,x,z)^T\) and \(c^TG_8c=1\).
 These channels are output-exchange symmetric and have Choi rank eight.
 The solid and dashed curves are the \(+\) and \(-\) branches of
 Eq.~\eqref{eq:local-global-eight}.  The diamond at their common right
 endpoint marks the global optimum; the circle marks the unique local
 optimum.  The inset separates these points.  The plotted curves describe only this family of channels and not the full
set of attainable local--global fidelity pairs.}
 \label{fig:localglobal}
\end{figure}

At \(d=8\), \(\kappa_8=(35+\sqrt{1201})/4\), so the locally optimal vector has
\(x/z=-(35+\sqrt{1201})/4\).  Its local and global fidelities are
\begin{align}
 F_{\rm loc}^{(8)}(c_*^{(8)})
 &=\frac{51+\sqrt{1201}}{140},\nonumber\\
 F_{\rm glob}^{(8)}(c_*^{(8)})
 &=\frac{2(2402+67\sqrt{1201})}{42035}\nonumber\\
 &=0.2247610297\ldots<\frac8{35}.
 \label{eq:global-of-local-eight}
\end{align}
The value \(8/35\) is the four-mode, one-parity-sector specialization of
the coherent-orbit formula presented in Ref.~\cite{Herasymenko2026}.

\subsection{Obstruction to Gaussian implementation}

For the physical eight-mode output, order the four modes of \(A\) before
those of \(B\).  The fixed-even-parity output registers then have the
ordinary tensor-product description \(\mathcal H_+\otimes\mathcal H_+\)
used above.  We call a state convex-Gaussian if it is a
convex mixture of pure fermionic Gaussian states
\cite{deMeloCwikTerhal2013,OszmaniecGuttKus2014}.

\begin{proposition}[Obstruction to Gaussian implementation]
\label{prop:gaussian-implementation}
On the physical output space \(\mathcal H_+\otimes\mathcal H_+\), the unique
symmetric local optimizer \(\Phi_{c_*^{(8)}}^{({\rm F})}\) for the four-mode
Gaussian orbit cannot be implemented by a protocol whose output on every
Gaussian input is a convex mixture of fermionic Gaussian states.  In
particular, it cannot be implemented by FLO with Gaussian ancillas and
partial trace.  The same obstruction applies to the particular globally
optimal channel \(\Phi_{c_{\rm glob}^{(8)}}^{({\rm F})}\) in
Eq.~\eqref{eq:cglob-eight}.
\end{proposition}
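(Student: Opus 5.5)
The plan is to refute the property on a single Gaussian input. It suffices to find one pure Gaussian $\ket{G}\in\mathcal H_+$ and a Hermitian witness $W$ on the eight-mode output $\mathcal H_+\otimes\mathcal H_+$ such that
\[
 w_G:=\sup\{\bra{\Gamma}W\ket{\Gamma}:\ket\Gamma\ \text{pure fermionic Gaussian on eight modes}\}
 <\Tr\bigl[W\,\Phi^{({\rm F})}_{c}(\proj{G})\bigr]
\]
for $c=c_*^{(8)}$ and for $c=c_{\rm glob}^{(8)}$. Because $\rho\mapsto\Tr[W\rho]$ is linear, every convex-Gaussian state satisfies $\Tr[W\rho]\leq w_G$, so this inequality excludes convex-Gaussianity of the output. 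The statement about FLO with Gaussian ancillas and partial trace then follows from the standard fact that such protocols map Gaussian inputs to mixed Gaussian states, and hence to convex-Gaussian ones.

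\textbf{Step 1: the output in Fock coordinates.} I take the Fock vacuum, so $T_{\rm tr}\ket{0000}_{\rm F}=(\ket1_{\rm O}+i\ket2_{\rm O})/\sqrt2=\psi_0$. By Eq.~\eqref{eq:Wc-Choi-relation} and Eq.~\eqref{eq:general-kraus}, the coordinate-space output is
\[
 \Phi_c(\proj{\psi_0})=\sum_k K_k\proj{\psi_0}K_k^\dagger .
\]
The vector $\sum_j\ket{j,j}$ is $\sqrt8\ket{\Omega_8}$ and corresponds under $T_{\rm tr}^{\otimes2}$ to the $\mathrm{Spin}(8)$-invariant pairing on $\mathcal H_+\otimes\mathcal H_+$. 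Up to normalization, that pairing is $\sum_{\mathbf n}(\pm)\ket{\mathbf n}\ket{\bar{\mathbf n}}$, the state built from the invariant $\theta_+$. With this identification I would express the Kraus vectors $K_k\psi_0$ in the physical occupation basis. Each is a combination of $\ket{v}\otimes\ket{\text{vac}}$-type terms and the invariant pairing, and the output is a rank-$\le 8$ mixture of them.

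\textbf{Step 2: choice of witness.} The output is a mixture of vectors of the form $\alpha\ket{\phi}\ket{G}+\beta\ket{G}\ket{\phi'}+\gamma\ket{\Omega}$. The component along $\ket\Omega$ is a maximally correlated state between the two four-mode halves. I would take
\[
 W=\tfrac{1}{8}\sum_{r=1}^8\gamma_r^{A}\gamma_r^{B}\quad\text{(suitably Hermitized)},
\]
or a quartic Majorana polynomial detecting the $\ket\Omega$ coherence, chosen to have a large expectation on that component. For Gaussian $\ket\Gamma$, Wick's theorem gives $\bra\Gamma W\ket\Gamma$ as a linear or Pfaffian function of the covariance matrix $\Gamma$. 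Computing $w_G$ then becomes a maximization over the compact set of real orthogonal complex structures $\Gamma^T=-\Gamma$, $\Gamma^2=-I$ in dimension $16$. I would solve it by block-decomposing $\Gamma$ into $AA$, $AB$ and $BB$ blocks and using the singular-value bound $\|\Gamma_{AB}\|\leq1$. If the linear witness is too weak, I would fall back on the de~Melo--\'Cwikli\'nski--Terhal criterion: convex-Gaussian states admit extensions annihilated by $\Lambda=\sum_r\gamma_r\otimes\gamma_r$ on two copies, which is a semidefinite test.

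\textbf{Step 3: evaluate and compare.} Using Brauer contractions, I would evaluate $\Tr[W\Phi_c(\proj{\psi_0})]$ as a quadratic form $c^TM_Wc$, in the same way Eq.~\eqref{eq:FM} was obtained. I would then check numerically and exactly that the value exceeds $w_G$ at $c_*^{(8)}$, with $x/z=-\kappa_8$, and at $c_{\rm glob}^{(8)}=(4,4,-1)^T/\sqrt{35}$.

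\textbf{Main obstacle.} I expect the hard part to be Step~1 and the consistency of Step~2 with it. The triality intertwiner is not a physical FLO map, so the invariant tensor $\ket{\Omega_8}$ and the transposes have to be transported explicitly into Majorana language on the output before the Wick bound can be applied. I would first fix $T_{\rm tr}$ concretely on the eight even Fock states compatibly with the stated image of the vacuum, and only then choose $W$.
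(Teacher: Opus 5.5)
Your proposal is a search strategy rather than a proof. The witness $W$ and the bound $w_G$ are never fixed, and the one concrete candidate cannot work.

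Each $\gamma_r^A$ flips the parity of $A$ and each $\gamma_r^B$ flips that of $B$. Hence $i\gamma_r^A\gamma_r^B$ maps $\mathcal H_+\otimes\mathcal H_+$ into $\mathcal H_-\otimes\mathcal H_-$, and its expectation vanishes on every state supported in $\mathcal H_+\otimes\mathcal H_+$, including $\Phi_c^{(\mathrm F)}(\proj{G})$. Meanwhile the supremum over all eight-mode pure Gaussian states is $w_G=1$, attained at $\Gamma_{AB}=I$.

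A witness for the $\ket{\Omega_8}$ coherence does no better. By Eq.~\eqref{eq:general-kraus}, the output for $c=(x,x,z)^T$ has weight $(z+x/4)^2$ on $\ket{\Omega_8}$. This is $0$ for $c_{\mathrm{glob}}^{(8)}$, where $z=-x/4$, and about $0.017$ for $c_*^{(8)}$. Yet Gaussian product states in the sector already reach $1/8$ (take $\psi_0\otimes\overline{\psi_0}$ in orthogonal coordinates). The pure Gaussian state with $\langle i\gamma_r^A\gamma_r^B\rangle=1$ for all $r$ is invariant under the diagonal $\mathrm{Spin}(8)$ action, so it has weight $1/2$ on the invariant pairing vector.

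The de Melo--\'Cwikli\'nski--Terhal hierarchy is a numerical test with no guarantee of detection at any fixed level. It therefore does not supply the missing inequality either. Taking $w_G$ over all eight-mode Gaussians is also the wrong optimization, since it includes states with fermionic $A{:}B$ correlations that the output can never be decomposed into.

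The missing idea, which the paper uses, is that on this sector convex-Gaussianity is the same as ordinary separability across $A{:}B$.

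\begin{itemize}
\item The parity argument above shows that every state supported in $\mathcal H_+\otimes\mathcal H_+$ has $\Gamma_{AB}=0$.
\item For a pure Gaussian state, $\Gamma^2=-I$ then forces $\Gamma_{AA}$ and $\Gamma_{BB}$ to be complex structures. Both four-mode marginals are therefore pure, and the state is $\proj{G_A}\otimes\proj{G_B}$. The paper reaches the same conclusion from the normal form~\eqref{eq:gaussian-normal-form}: definite parity gives $|\prod_j\nu_j|=1$, hence pure Gaussian marginals.
\item Every constituent of a convex decomposition of a state supported in $\mathcal H_+\otimes\mathcal H_+$ is itself supported there. Hence convex-Gaussian states in this sector are separable.
\end{itemize}

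Separability is invariant under the local unitary $(T_{\mathrm{tr}}^\dagger)^{\otimes2}$. So the obstacle you single out, transporting $\ket{\Omega_8}$ and the transposes into Majorana language, never arises: one can work in orthogonal coordinates throughout.

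For the input $\psi_0$ and distinct $r,s\ge3$, Eq.~\eqref{eq:general-kraus} gives $\langle r,r|\rho_c|s,s\rangle=z^2/d$ and $\langle r,s|\rho_c|r,s\rangle=\langle s,r|\rho_c|s,r\rangle=0$. Taking $\eta=(\ket{r,s}-\ket{s,r})/\sqrt2$ then gives $\langle\eta|\rho_c^{T_B}|\eta\rangle=-z^2/d<0$. Since $z\neq0$ for both $c_*^{(8)}$ and $c_{\mathrm{glob}}^{(8)}$, both outputs are entangled and hence not convex-Gaussian.

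For $c_*^{(8)}$ this negativity is only about $1.9\times10^{-4}$. That is why a coarsely guessed coherence witness is unlikely to succeed here.
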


\begin{proof}
In a suitable Majorana basis related to the original one by an
\(SO(2m)\) transformation, an \(m\)-mode Gaussian state has the normal form
\cite{Bravyi2005}
\begin{equation}
 \sigma=2^{-m}\prod_{j=1}^m(I+\nu_j Z_j),
 \qquad
 Z_j=i\widetilde\gamma_{2j-1}\widetilde\gamma_{2j},
 \qquad |\nu_j|\leq1.
 \label{eq:gaussian-normal-form}
\end{equation}
The operators \(Z_j\) commute and square to the identity.  In this
\(SO(2m)\)-related Majorana basis, the parity operator is
\(P_m=i^m\widetilde\gamma_1\cdots\widetilde\gamma_{2m}=\prod_jZ_j\).
Hence
\begin{equation}
 \Tr(\sigma P_m)=\prod_{j=1}^m\nu_j,
 \qquad
 \Tr(\sigma^2)=\prod_{j=1}^m\frac{1+\nu_j^2}{2}.
 \label{eq:gaussian-parity-purity}
\end{equation}
If \(\sigma\) has definite parity, then
\(|\Tr(\sigma P_m)|=1\), so every \(|\nu_j|=1\) and
Eq.~\eqref{eq:gaussian-parity-purity} shows that \(\sigma\) is pure.

Now let \(\sigma_{AB}\) be an eight-mode Gaussian state supported in
\(\mathcal H_+\otimes\mathcal H_+\).  Its four-mode marginals are Gaussian
\cite{Bravyi2005} and have definite even parity; they are therefore pure.
A bipartite state with a pure marginal is a product state, so
\(\sigma_{AB}=\proj{G_A}\otimes\proj{G_B}\) for pure even Gaussian states
\(G_A,G_B\).  More generally, if a convex mixture of Gaussian states is
supported in \(\mathcal H_+\otimes\mathcal H_+\), positivity implies that
every positive-weight constituent is supported there.  Thus every such
convex-Gaussian state is separable across \(A{:}B\).

With the triality convention fixed after Eq.~\eqref{eq:concurrence},
\(\psi_0=(\ket{1}_{\rm O}+i\ket{2}_{\rm O})/\sqrt2\) represents the Fock
vacuum and is therefore a Gaussian input.  For \(c=(x,x,z)^T\), set
\(\rho_c=\Phi_c(\rho_{\psi_0})\).  Direct substitution in
Eq.~\eqref{eq:general-kraus} gives, for distinct indices \(3\leq r,s\leq d\),
\begin{equation}
 \langle r,r|\rho_c|s,s\rangle=\frac{z^2}{d},
 \qquad
 \langle r,s|\rho_c|r,s\rangle
 =\langle s,r|\rho_c|s,r\rangle=0.
 \label{eq:gaussian-output-elements}
\end{equation}
Consequently, for \(\eta=(\ket{r,s}-\ket{s,r})/\sqrt2\),
\begin{equation}
 \langle\eta|\rho_c^{T_B}|\eta\rangle=-\frac{z^2}{d}<0
 \qquad (z\ne0).
 \label{eq:gaussian-npt-witness}
\end{equation}
Here \(T_B\) is the ordinary Hilbert-space partial transpose in
\(\mathcal B_{\rm O}\) on register \(B\), not a fermionic partial transpose.
At \(d=8\), one may take \(r=3\) and \(s=4\).
Equation~\eqref{eq:physical-four-mode-channel} conjugates the output locally
by \(T_{\rm tr}^\dagger\), which preserves the spectrum of the partial
transpose.  Every separable state has a positive partial
transpose, so the negative expectation in Eq.~\eqref{eq:gaussian-npt-witness}
certifies entanglement.  Both
\(c_*^{(8)}\) in Eq.~\eqref{eq:cstar-general} and
\(c_{\rm glob}^{(8)}\) in Eq.~\eqref{eq:cglob-eight} have \(z\ne0\), so
the corresponding physical outputs are entangled and hence are not
convex-Gaussian.  Finally, Theorem~\ref{thm:unique} rules out a different
Gaussian protocol attaining the same symmetric local optimum.
\end{proof}
We do not prove uniqueness of the global optimum here.  The above argument
only shows that the globally optimal channel in
Eq.~\eqref{eq:cglob-eight} is not Gaussian implementable.

\section{Discussion and conclusions}
\label{sec:conclusions}

In this paper, we have studied \(1\to2\) asymmetric cloning of pure states
belonging to orthogonal-group orbits.  The orbits are parametrized by
\(q=|\psi^T\psi|^2\), and we have determined the complete region of
achievable local fidelities for every \(q\in[0,1]\) and every \(d\geq3\).
Using orthogonal covariance and the Brauer algebra, the optimization over
cloning channels can be reduced to a finite-dimensional spectral problem.
What is more, for every nonzero nonnegative supporting direction we give an
explicit optimal cloner with Choi rank \(d\).

We have also studied the symmetric case in more detail.  The optimal
symmetric fidelity has a unique minimum at
\(q=2/(d+1)\), where the second moment of the orbit is equal to the Haar
second moment and the cloning problem becomes the universal one.  For
\(d=8\), triality identifies \(q\) with the squared fermionic concurrence
\cite{OszmaniecGuttKus2014}.  In this case \(q=0\) corresponds to pure
Gaussian states, while \(q=1\) corresponds to maximal concurrence.  The
minimum occurs between these two points, showing that the optimal symmetric
cloning fidelity is not monotone in fermionic concurrence.

For the Gaussian orbit we have additionally compared local and global
cloning.  The globally optimal channel is given by the coherent-state
cloner of Ref.~\cite{ChiribellaYang2014}, while the symmetric locally
optimal channel is different.  We prove that the local optimizer is unique
among all CPTP maps, without imposing covariance, and that its Choi rank is
\(d\).  We also calculate explicitly the local and global fidelities inside
the real exchange-symmetric family containing both channels.  Thus the
difference between the local and global optima is not an artifact of the
covariance reduction.

Our results concern the finite-copy \(1\to2\) problem.  They do not address
the high-fidelity regime in which many input copies are available.  In
particular, Fanizza \emph{et al.} conjecture that producing one additional
copy of an \(m\)-mode Gaussian state with joint fidelity at least
\(1-\varepsilon\) requires \(\Theta(m^2/\varepsilon)\) input copies, up to
constant factors \cite{FanizzaEtAl2026}.

There are several natural questions left open.  One is the extension of the
present method to \(1\to n\) cloning on the same family of orthogonal
orbits.  Another is to determine the optimal fidelity region when the
cloning operations themselves are restricted to fermionic Gaussian
operations.  We have shown that such operations cannot attain the symmetric
local optimum for the four-mode Gaussian orbit.  It would also be
interesting to understand the situation for more than four modes.  In this
case triality is no longer available and one has to deal directly with
centralizer algebras of spin representations \cite{Wenzl2012}.

\section*{Acknowledgments}
This work was carried out under the IRA Programme, project
No.~FENG.02.01-IP.05-0006/23, financed by the FENG Programme 2021--2027,
Priority FENG.02, Measure FENG.02.01, with support from the FNP. ChatGPT
(OpenAI) was used for language editing and for discussing the
presentation and exposition of the manuscript.

\appendix

\section{Three-copy block decomposition and Kraus normalization}
\label{app:spectrum}

This appendix derives the operator decomposition used in
Theorem~\ref{thm:support}.  Put \(s_1=a+b\) and
\(r=\sqrt{a^2+b^2-ab}\).  For the embeddings in
Eq.~\eqref{eq:Tmaps}, direct contraction gives
\(T_\alpha^\dagger T_\beta=(G_d)_{\alpha\beta}I\), with \(G_d\) in
Eq.~\eqref{eq:general-G}.  In the ordered multiplicity basis
\((T_1,T_2,T_3)\), the swaps act as follows, where \(\widehat{=}\) denotes
the corresponding representation matrix:
\begin{equation}
 s_{13}\ \widehat{=}\
 \begin{pmatrix}0&0&1\\0&1&0\\1&0&0\end{pmatrix},
 \qquad
 s_{23}\ \widehat{=}\
 \begin{pmatrix}1&0&0\\0&0&1\\0&1&0\end{pmatrix},
 \label{eq:swap-actions}
\end{equation}
and the contractions act as
\begin{equation}
 e_{13}\ \widehat{=}\
 \begin{pmatrix}0&0&0\\1&d&1\\0&0&0\end{pmatrix},
 \qquad
 e_{23}\ \widehat{=}\
 \begin{pmatrix}d&1&1\\0&0&0\\0&0&0\end{pmatrix}.
 \label{eq:contraction-actions}
\end{equation}
Substitution into Eq.~\eqref{eq:general-K} yields the multiplicity matrix in Eq.~\eqref{eq:general-B}.  Its characteristic polynomial is
\begin{align}
 \det(\mu I-B_q^{(d)})={}&
 \mu^3-(2\Ad+\Nd)s_1\mu^2\nonumber\\
 &+(\Ad^2+2\Ad\Nd-\Cd^2)s_1^2\mu\nonumber\\
 &+\bigl(\Nd^2-2\Ad\Cd-2\Ad\Nd
 +2\Cd^2\bigr)ab\,\mu\nonumber\\
 &+\Nd(\Cd^2-\Ad^2)s_1^3\nonumber\\
 &+\Nd\bigl(2\Ad^2+\Ad\Cd\nonumber\\
 &\hspace{2.8em}{}-\Ad\Nd-3\Cd^2\bigr)ab\,s_1.
 \label{eq:general-cubic}
\end{align}
The relation \((B_q^{(d)})^TG_d=G_dB_q^{(d)}\) follows either by direct
multiplication or from Eq.~\eqref{eq:pencil-identity}, because
\(G_dB_q^{(d)}=\Nd[aM_A^{(d)}+bM_B^{(d)}]\) is real symmetric.

On \(\mathcal T_d^\perp\), \(e_{13}=e_{23}=0\), and
\(d\Nd K_q^{(d)}=\Ad(a+b)I+\Cd(as_{13}+bs_{23})\).
The trivial and sign representations give \((\Ad+\Cd)s_1\) and \((\Ad-\Cd)s_1\), respectively.  On the standard representation one may take
\begin{equation}
 s_{13}=\begin{pmatrix}1&0\\0&-1\end{pmatrix},
 \qquad
 s_{23}=\begin{pmatrix}-1/2&\sqrt3/2\\\sqrt3/2&1/2\end{pmatrix}.
\end{equation}
The matrix \(as_{13}+bs_{23}\) has trace zero and determinant
\(-(a^2+b^2-ab)\), proving the four complementary branches in
Eq.~\eqref{eq:general-support}.  Their multiplicities, together with those
of the eigenvalues on the vector isotypic component, are listed in
Table~\ref{tab:spectrum-multiplicities}.  In particular, every branch in
Eq.~\eqref{eq:general-support} occurs for every \(d\geq3\).

\begin{table}[!b]
 \caption{Eigenvalue multiplicities in the spectrum of
 \(d\Nd K_q^{(d)}(a,b)\).}
 \label{tab:spectrum-multiplicities}
 \footnotesize
 \setlength{\tabcolsep}{3pt}
 \renewcommand{\arraystretch}{1.15}
 \begin{tabular}{@{}p{0.29\columnwidth}ll@{}}
 \toprule
 \raggedright \(O(d)\) component & Eigenvalue & Multiplicity \\
 \midrule
 \raggedright Vector isotypic component
  & \(\mu_1,\mu_2,\mu_3\) & \(d\) each \\
 \raggedright Symmetric traceless tensors of order three
  & \((\Ad+\Cd)s_1\) & \(d(d-1)(d+4)/6\) \\
 \raggedright Alternating tensors of order three
  & \((\Ad-\Cd)s_1\) & \(d(d-1)(d-2)/6\) \\
 \raggedright Traceless mixed-symmetry tensors
  & \(\Ad s_1\mathbin{\pm}\Cd r\) & \(d(d^2-4)/3\) \\
 \bottomrule
 \end{tabular}
\end{table}
Here mixed symmetry denotes the part orthogonal to the fully symmetric
and fully alternating tensors; tracelessness additionally requires that
all pair contractions vanish, as in Eq.~\eqref{eq:traceless-sector}.
At coincident eigenvalues the corresponding multiplicities add, and the
dimension check is
\begin{equation}
 3d+\frac{d(d-1)(d+4)}6+\frac{d(d-1)(d-2)}6
 +\frac{2d(d^2-4)}3=d^3.
 \label{eq:spectrum-dimension-check}
\end{equation}

\subsection{Kraus completeness calculation}

We verify the trace-preservation condition for the Kraus
operators in Eq.~\eqref{eq:general-kraus}.  Define
\begin{equation}
 \begin{aligned}
 A_k&:=\sum_{j=1}^d\ket{k,j}\bra j,\\
 B_k&:=\sum_{j=1}^d\ket{j,k}\bra j,\\
 C_k&:=\sum_{j=1}^d\ket{j,j}\bra k,\\
 K_k&=\frac1{\sqrt d}(c_1A_k+c_2B_k+c_3C_k).
 \end{aligned}
 \label{eq:kraus-building-blocks}
\end{equation}
Orthonormality of the fixed basis \(\mathcal B_{\rm O}\) gives
\begin{equation}
 \begin{aligned}
 A_k^\dagger A_k&=B_k^\dagger B_k=I,
 &C_k^\dagger C_k&=d\proj{k},\\
 A_k^\dagger B_k&=A_k^\dagger C_k=B_k^\dagger C_k=\proj{k},
 \end{aligned}
 \label{eq:kraus-block-products}
\end{equation}
and the adjoints of the mixed products in the second line give the same
rank-one projector.  Since \(c_1,c_2,c_3\) are real,
\begin{equation}
 K_k^\dagger K_k
 =\frac{c_1^2+c_2^2}{d}I
 +\left[c_3^2+\frac2d(c_1c_2+c_1c_3+c_2c_3)\right]\proj{k}.
 \label{eq:single-kraus-product}
\end{equation}
Summing over \(k\) and using \(\sum_k\proj{k}=I\) therefore yields
\begin{align}
 \sum_{k=1}^dK_k^\dagger K_k
 &=\left[c_1^2+c_2^2+c_3^2
 +\frac2d(c_1c_2+c_1c_3+c_2c_3)\right]I\nonumber\\
 &=(c^TG_dc)I=I.
 \label{eq:kraus-completeness-calculation}
\end{align}

\section{Marginal fidelity bounds and Brauer-state extensions}
\label{app:marginal-extrema}

\subsection{Minimum marginal fidelity}

To determine the minimum attainable fidelity of either marginal, use the
decomposition
\(\C^d\otimes\C^d=\C\Omega_d\oplus\Sym_0^2(\C^d)
\oplus\wedge^2(\C^d)\), where \(\wedge^2(\C^d)\) is the antisymmetric
subspace.  On these three summands, \(R_q^{(d)}\) has
eigenvalues \(1/d\), \((\Ad+\Cd)/(d\Nd)\), and
\((\Ad-\Cd)/(d\Nd)\), respectively.  For a general one-to-two
channel \(\Phi\), write \(\Phi_A:=\Tr_B\circ\Phi\) and
\(\Phi_B:=\Tr_A\circ\Phi\).  The reduced Choi matrix
\(J_{\Phi_A}=\Tr_BJ_\Phi\) is positive and satisfies
\(\Tr_AJ_{\Phi_A}=I_C\), hence \(\Tr J_{\Phi_A}=d\).  It follows from
Eq.~\eqref{eq:fidelities-Choi-R} that
\(F_A^{(d,q)}(\Phi)=\Tr[J_{\Phi_A}R_q^{(d)}]
\geq d\lambda_{\min}[R_q^{(d)}]\).
The bound is attainable.  If \(P\) is an invariant spectral projector of
\(R_q^{(d)}\), with \(r=\rank P\), then Schur's lemma and normalization give
\(\Tr_A P=(r/d)I_C\).  Hence \(J_P=(d/r)P\) is the Choi matrix of a
one-output channel and attains \(d\lambda\) on an eigenspace with eigenvalue
\(\lambda\).  Appending the maximally mixed state \(I/d\) as the second
output produces a one-to-two channel with the same marginal fidelity.
Consequently,
\begin{equation}
 \begin{aligned}
 F_{\min}^{(d)}(q)
 &=d\,\lambda_{\min}\!\left[R_q^{(d)}\right]\\
 &=\min\!\left\{1,
 \frac{\Ad+\Cd}{\Nd},
 \frac{\Ad-\Cd}{\Nd}\right\}.
 \end{aligned}
 \label{eq:marginal-minimum-spectral-form}
\end{equation}
For \(d\geq3\) and \(q\in[0,1]\), the last two candidates are
nonnegative and strictly smaller than \(1\), so the invariant-line
eigenvalue is never minimal.  Since
\(\min\{\Ad+\Cd,\Ad-\Cd\}=\Ad-|\Cd|\), and \(\Cd\) changes sign at
\(q=2/(d+1)\), this gives Eq.~\eqref{eq:marginal-minimum}.

The individual lower bounds imply
\(h_q^{(d)}(-1,-1)\leq-2F_{\min}^{(d)}(q)\).
For \(q\leq q_{\rm H}\), the trivial branch of
Theorem~\ref{thm:support}, evaluated at \(a=b=-1\), attains this bound;
for \(q\geq q_{\rm H}\), the sign branch does so.  Hence equality holds.
An optimizer of
\(-F_A-F_B\) therefore satisfies
\(F_A+F_B=2F_{\min}^{(d)}(q)\).  Since both marginal fidelities are at
least \(F_{\min}^{(d)}(q)\), both equal the minimum, proving Eq.~\eqref{eq:componentwise-minimum}.

\subsection{Perfect-copy endpoints}

Suppose that \(F_A^{(d,q)}(\Phi)=1\).  Since
\(1-f_A(\psi;\Phi)\) is continuous and nonnegative and the invariant
measure has full support, the first marginal equals \(\rho_\psi\) for every
orbit input.  Take a Stinespring isometry
\(V:\mathcal H\to\mathcal H_A\otimes\mathcal H_B\otimes\mathcal H_E\),
where \(E\) is an auxiliary environment and
\(\Phi(X)=\Tr_E(VXV^\dagger)\) \cite{Stinespring1955}.
Purity of the first marginal gives
\(V\ket\psi=\ket\psi_A\otimes\ket{\eta_\psi}_{BE}\).  For orbit vectors
with \(\langle\psi|\phi\rangle\neq0\), preservation of inner products
implies \(\langle\eta_\psi|\eta_\phi\rangle=1\).  The projective orbit is
connected, so a path subdivided into nonorthogonal consecutive rays shows
that \(\eta_\psi\) is independent of \(\psi\).  Since the orbit spans
\(\C^d\), linearity yields \(\Phi(X)=X\otimes\sigma_B\) for a fixed state
\(\sigma_B\).  The \(1\)-design identity then gives
\(F_B^{(d,q)}(\Phi)=1/d\).  Conversely, appending any fixed state to an
unchanged input attains this pair.  Exchanging the outputs gives the other
endpoint.

\subsection{Symmetric extensions of Brauer states}
\label{app:brauer-extensions}

We assign extendibility indices using the bipartition \(C{:}A\), with the
extension taken on the output factor \(A\), although tensor factors are
displayed as \(AC\).  Thus \(\sigma_{AC}\) is \((1,2)\)-extendible if there
exists a state \(\widetilde\sigma_{ABC}\), with \(B\simeq A\), that is
invariant under \(A\leftrightarrow B\) and whose \(AC\) and \(BC\) marginals
both equal \(\sigma\) after identifying \(B\simeq A\).

\begin{proposition}
\label{prop:brauer-extendibility}
Under this convention, a state \(\sigma_{AC}\) occurs as the common
\(AC\) and \(BC\) marginal of the normalized Choi state of an
exchange-symmetric \(O(d)\)-covariant one-to-two channel if and only if it
is a \((1,2)\)-extendible Brauer state of local dimension \(d\).
\end{proposition}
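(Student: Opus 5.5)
The proof has two directions, and the key tool for the converse is the orthogonal twirl of Eq.~\eqref{eq:Od-twirl} applied at the level of states.

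\emph{Forward direction.} Let \(\Phi\) be exchange symmetric and \(O(d)\)-covariant, and set \(\omega_{ABC}=J_\Phi/d\).

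First, Eq.~\eqref{eq:choi-orthogonal-invariance} gives invariance under \(O^{\otimes3}\). Taking the partial trace over \(B\) therefore shows that \(\sigma_{AC}=\omega_{AC}\) commutes with \(O\otimes O\). The two-copy commutant is the image of \(B_2(d)\), namely \(\operatorname{span}\{I,s,e_d\}\), so \(\sigma_{AC}\) is a Brauer state of local dimension \(d\).

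Second, exchange symmetry of \(\Phi\) means \(s_{AB}J_\Phi s_{AB}=J_\Phi\). Hence \(\omega\) is invariant under \(A\leftrightarrow B\), and its \(BC\) marginal equals its \(AC\) marginal after identifying \(B\simeq A\). Thus \(\omega\) is exactly the required \((1,2)\)-extension, and \(\sigma_{AC}\) is \((1,2)\)-extendible.

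\emph{Converse direction.} Let \(\sigma_{AC}\) be a \((1,2)\)-extendible Brauer state, with some symmetric extension \(\widetilde\sigma_{ABC}\).

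Step 1: twirl the extension. Replace \(\widetilde\sigma\) by
\[
\overline\sigma_{ABC}=\int_{O(d)}O^{\otimes3}\widetilde\sigma\,(O^T)^{\otimes3}\,dO .
\]
The diagonal action commutes with \(s_{AB}\), so \(\overline\sigma\) is still invariant under \(A\leftrightarrow B\). Its \(AC\) and \(BC\) marginals are the twirls of \(\sigma\). Since \(\sigma\) is already \(O\otimes O\)-invariant, these twirls equal \(\sigma\).

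Step 2: check the reference marginal. The state \(\overline\sigma_C\) commutes with every \(O\). The vector representation is irreducible, so \(\overline\sigma_C=I/d\).

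Step 3: build the channel. Set \(J=d\,\overline\sigma\). Then \(J\geq0\) and \(\Tr_{AB}J=I_C\), so by Eq.~\eqref{eq:choi-conditions} and the reconstruction formula \eqref{eq:Choi-reconstruction}, \(J\) is the Choi matrix of a CPTP map \(\Phi\). Invariance of \(J\) under \(O^{\otimes3}\) gives covariance of \(\Phi\), reversing the step leading to Eq.~\eqref{eq:choi-orthogonal-invariance} and using \(\overline O=O\). Invariance under \(s_{AB}\) gives exchange symmetry, because \(s_{AB}\Phi(X)s_{AB}\) has Choi matrix \(s_{AB}Js_{AB}\). The common marginal of \(J/d\) is \(\sigma\), which completes the converse.

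\emph{Main point to verify.} The only delicate step is the twirl in Step 1. One must check that it preserves exchange symmetry and leaves the prescribed marginals unchanged. Both follow because the twirl commutes with partial traces and with \(s_{AB}\), and because \(\sigma\) is a fixed point of the two-copy twirl. The rest is the standard Choi correspondence.
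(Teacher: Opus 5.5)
Your proposal is correct and follows the paper's proof essentially step for step. The forward direction reads off the exchange-invariant extension from the normalized Choi state, and the converse twirls an arbitrary symmetric extension by $O^{\otimes3}$, uses irreducibility of the vector representation to get $\overline\sigma_C=I/d$, and rescales by $d$. Your additional checks fill in details the paper leaves implicit: that $\sigma_{AC}$ is a Brauer state in the forward direction, and that invariance of $J$ under $O^{\otimes3}$ and $s_{AB}$ gives covariance and exchange symmetry of the channel.
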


\begin{proof}
Let \(\omega_{ABC}=J_\Phi/d\) be the normalized Choi state.
If the channel is exchange symmetric, then
\(\omega_{AC}=\omega_{BC}=: \sigma\), after identifying \(A\simeq B\), and
\(\omega_{ABC}\) is an exchange-invariant extension of \(\sigma\).
Conversely, take a \((1,2)\)-extendible Brauer state \(\sigma\) and an
exchange-invariant extension.
Twirling the extension by \(O^{\otimes3}\) preserves its two specified
marginals and produces an \(O(d)\)-invariant extension that remains
exchange invariant.  Its \(C\)
marginal commutes with the irreducible vector representation, hence equals
\(I/d\).  Multiplying by \(d\) gives the Choi matrix of an
exchange-symmetric covariant channel.
\end{proof}

\section{Gaussian-symmetric subspaces for arbitrary mode number}
\label{app:gaussian-dimension}

Consider the \(m\)-mode Fock space
\(\mathcal F_m=\mathcal H_{m,+}\oplus\mathcal H_{m,-}\) with Majoranas
\(\gamma_1,\ldots,\gamma_{2m}\) satisfying Eq.~\eqref{eq:CAR}, and let
\(\widehat N_m:=\sum_{j=1}^m a_j^\dagger a_j\).  Its parity operator and the
dimension of either parity sector are
\begin{equation}
 \begin{aligned}
 P_m&:=(-1)^{\widehat N_m}=i^m\gamma_1\cdots\gamma_{2m},\\
 d_m&:=\dim\mathcal H_{m,\pm}=2^{m-1}.
 \end{aligned}
 \label{eq:general-parity-dimension}
\end{equation}
Define the two-copy Majorana operator on
\(\mathcal F_m\otimes\mathcal F_m\) by
\begin{equation}
 \Lambda_m:=\sum_{r=1}^{2m}\gamma_r\otimes\gamma_r.
 \label{eq:Lambda}
\end{equation}
Here \(\otimes\) is the ordinary Hilbert-space tensor product of two
copies.  Each Majorana reverses parity, so \(\Lambda_m\) maps
\(\mathcal H_{m,\pm}^{\otimes2}\) into
\(\mathcal H_{m,\mp}^{\otimes2}\).  For a normalized
\(\ket\psi\in\mathcal H_{m,\pm}\),
\begin{equation}
 \ket\psi\text{ is Gaussian}
 \quad\Longleftrightarrow\quad
 \Lambda_m\ket\psi^{\otimes2}=0.
 \label{eq:gaussian-criterion}
\end{equation}
Let \(\mathcal G_{m,\pm}\subset\mathcal H_{m,\pm}\) be the set of normalized
pure Gaussian vectors.  More precisely, the fixed-parity kernels satisfy
\begin{equation}
 \begin{aligned}
 \mathcal K_{m,\pm}
 &:={\left\{v\in\mathcal H_{m,\pm}^{\otimes2}:\Lambda_m v=0\right\}}\\
 &=\operatorname{span}{\left\{
 \ket G^{\otimes2}:\ket G\in\mathcal G_{m,\pm}\right\}}\\
 &\subset\Sym^2(\mathcal H_{m,\pm}).
 \end{aligned}
 \label{eq:fixed-parity-Gaussian-kernel}
\end{equation}
The subspace \(\mathcal K_{m,\pm}\) is the Gaussian-symmetric subspace in
the fixed-parity sector \cite{deMeloCwikTerhal2013}.  The
operator \(\Lambda_m\) also enters quantum algorithms for estimating
non-Gaussianity and testing Gaussianity and the projective \(2\)-design property
\cite{Tarabunga2026}.

Reference~\cite{deMeloCwikTerhal2013} also gives the full-kernel dimension
\begin{equation}
 \dim\ker\Lambda_m=\binom{2m}{m}.
 \label{eq:full-Lambda-kernel-dimension}
\end{equation}
Its parity decomposition follows by setting
\(X_r:=\gamma_r\otimes\gamma_r\).  The operators \(X_r\) commute and have eigenvalues
\(\pm1\).  Thus \(\Lambda_m=0\) in their joint eigenbasis precisely when
\(m\) eigenvalues are \(+1\) and \(m\) are \(-1\).  Moreover,
\(P_m\otimes P_m=(-1)^m\prod_{r=1}^{2m}X_r\).
On a zero-eigenvector of \(\Lambda_m\), the product on the right-hand side is
\((-1)^m\).  Hence
\((P_m\otimes P_m)v=v\) for every \(v\in\ker\Lambda_m\), so the kernel contains
only the equal-parity sectors.  The unitary \(\gamma_1\otimes\gamma_1\) commutes with
\(\Lambda_m\) and maps \(\mathcal K_{m,+}\) bijectively onto
\(\mathcal K_{m,-}\).
It follows that the Gaussian-symmetric subspace in either fixed-parity
sector has dimension
\begin{equation}
 D_m:=\dim\mathcal K_{m,+}=\dim\mathcal K_{m,-}
 =\frac12\binom{2m}{m}.
\end{equation}
Its codimension inside the symmetric square is
\begin{equation}
 \Delta_m=\frac{d_m(d_m+1)}2-D_m.
 \label{eq:fermionic-defect}
\end{equation}
For the first few values,
\begin{equation}
\begin{array}{c|cccccc}
m&1&2&3&4&5&6\\ \hline
d_m&1&2&4&8&16&32\\
D_m&1&3&10&35&126&462\\
\Delta_m&0&0&0&1&10&66
\end{array}
\label{eq:defect-table}
\end{equation}
Thus four modes are the first case in which the Gaussian-symmetric subspace
is a proper subspace of the symmetric square, and its codimension is one.

At \(m=4\), this gives \(\dim\mathcal K_{4,+}=35\), as used in
Sec.~\ref{sec:gaussian-span}.

\section{Four-mode Gaussian boundary as an algebraic curve}
\label{app:sextic}

For \(d=8\) and \(q=0\), triality identifies the quadric orbit with the
four-mode fermionic Gaussian orbit.  Eliminating the support parameter
yields an implicit algebraic equation for its Pareto boundary.  Put
\(X=F_A\), \(Y=F_B\), \(u=X+Y\), and \(v=XY\).
For a Pareto supporting direction \(a,b\geq0\), not both zero, let
\(\mu=\mu_+(8,0;a,b)\) be the largest eigenvalue of the vector multiplicity
matrix, as in Corollary~\ref{cor:positive}, and write
\begin{equation}
 s_1=a+b,
 \qquad
 \tau=\frac{ab}{s_1^2},
 \qquad
 \lambda=\frac{\mu}{s_1},
 \qquad 0\leq\tau\leq\frac14.
 \label{eq:rescaled-support-variables}
\end{equation}
At \(d=8\) and \(q=0\), the homogeneity of Eq.~\eqref{eq:general-cubic}
reduces the cubic for the vector multiplicity block to
\begin{equation}
 P(\lambda,\tau)
 =\lambda^3-86\lambda^2
 +(1180+3820\tau)\lambda
 -4200-32200\tau=0.
 \label{eq:gaussian-rescaled-cubic}
\end{equation}
Thus \(\lambda\) is the largest root of this cubic.  Write
\(h(a,b):=h_0^{(8)}(a,b)=s_1\lambda(\tau)/70\).
At a differentiable supporting direction, the gradient of the support
function gives its contact point with the fidelity region:
\begin{equation}
 X=\frac{\partial h}{\partial a},
 \qquad
 Y=\frac{\partial h}{\partial b}.
 \label{eq:envelope-coordinates}
\end{equation}
Let \(\delta:=1-4\tau=(a-b)^2/s_1^2\).
Differentiating Eq.~\eqref{eq:gaussian-rescaled-cubic} implicitly, define
\begin{align}
 D_\lambda
 &:=\frac{\partial P}{\partial\lambda}
 =3\lambda^2-172\lambda+1180+3820\tau,
 \nonumber\\
 N_\lambda&:=32200-3820\lambda,
\end{align}
so that \(\lambda'(\tau)=N_\lambda/D_\lambda\).
A direct differentiation of \(h=s_1\lambda(\tau)/70\) yields
\begin{equation}
 u=\frac{2\lambda+\delta\lambda'}{70},
 \qquad
 (X-Y)^2=\frac{\delta(\lambda')^2}{70^2}.
 \label{eq:u-difference-parametric}
\end{equation}
Since \((X-Y)^2=u^2-4v\), every smooth exposed point satisfies
\begin{align}
 P(\lambda,\tau)&=0,\label{eq:elim-system-P}\\
 E_1(\lambda,\tau,u)
 &:=70uD_\lambda-2\lambda D_\lambda-\delta N_\lambda=0,
 \label{eq:elim-system-E1}\\
 E_2(\lambda,\tau,u,v)
 &:=4900(u^2-4v)D_\lambda^2-\delta N_\lambda^2=0.
 \label{eq:elim-system-E2}
\end{align}
Equation~\eqref{eq:gaussian-rescaled-cubic} is linear in \(\tau\) and gives
\begin{equation}
 \tau=-\frac{(\lambda-70)(\lambda-10)(\lambda-6)}
 {20(191\lambda-1610)}.
 \label{eq:tau-of-lambda}
\end{equation}
Substitute Eq.~\eqref{eq:tau-of-lambda} into
Eqs.~\eqref{eq:elim-system-E1} and \eqref{eq:elim-system-E2}, and let
\(\operatorname{num}\) denote the numerator after cancelling common
factors.  The polynomial resultant \(\operatorname{Res}_\lambda\)
eliminates their common root \(\lambda\), giving
\begin{multline}
 \operatorname{Res}_{\lambda}\!\left(
 \operatorname{num}E_1,\operatorname{num}E_2
 \right)=C_0\,\mathcal P(u,v),
 \qquad C_0\neq0,
 \label{eq:resultant-definition}
\end{multline}
where
\begin{multline}
\mathcal P(u,v)=1032192u^6-4066560u^5\\
{}-467712u^4v+9054652u^4-51010232u^3v\\
{}-11459032u^3+232938748u^2v^2+118873020u^2v\\
{}+6543087u^2-506334248uv^2-75277566uv\\
{}-1128886u-14829948v^3+280066507v^2\\
{}+6996458v+60835.
 \label{eq:sextic}
\end{multline}
The boundary satisfies \(\mathcal P(u,v)=0\).  The physically relevant
branch connects
\begin{equation}
 (1,1/8)
 \longrightarrow
 \left(\frac{51+\sqrt{1201}}{140},
 \frac{51+\sqrt{1201}}{140}\right)
 \longrightarrow(1/8,1).
 \label{eq:sextic-branch}
\end{equation}
An explicit parametrization selects this branch from the other real
components of the sextic.  Set \(t=\lambda\) and define
\begin{align}
 f(t)&=(t-70)(t-10)(t-6),
 &\ell(t)&=3820t-32200,\nonumber\\
 H(t)&=f'(t)\ell(t)-3820f(t),
 &J(t)&=\ell(t)+4f(t).
 \label{eq:branch-polynomials}
\end{align}
Then set
\begin{align}
 u(t)&=\frac{2tH(t)-J(t)\ell(t)}{70H(t)},\nonumber\\
 w(t)&=\frac{J(t)\ell(t)^3}{4900H(t)^2},
 &v(t)&=\frac{u(t)^2-w(t)}4.
 \label{eq:branch-rational-functions}
\end{align}
Along the largest-root branch, \(t\) decreases from \(70\) at \(\tau=0\) to
\((51+\sqrt{1201})/2\) at \(\tau=1/4\).
The half of the physical boundary with \(X\geq Y\) is
\begin{equation}
 \begin{aligned}
  \frac{51+\sqrt{1201}}2&\leq t\leq70,\\
  X&=\frac{u(t)+\sqrt{w(t)}}2,
  &Y&=\frac{u(t)-\sqrt{w(t)}}2.
 \end{aligned}
 \label{eq:physical-sextic-parametrization}
\end{equation}
and the other half is obtained by exchanging \(X\) and \(Y\).  On this
interval \(H(t)>0\) and \(w(t)\geq0\), and substitution of
\(u=X+Y\) and \(v=XY\) into Eq.~\eqref{eq:sextic} gives zero.  Thus
Eq.~\eqref{eq:sextic} is an implicit equation for the physical boundary;
its other real components are not asserted to be attainable fidelities.

\section{Uniqueness of the symmetric quadric-orbit cloner}
\label{app:unique}

At \(q=0\) and \(a=b=1\), put
\begin{equation}
 \mathcal D_d=\Nd^2-16(d-2),
 \qquad
 \mu_+=\frac{\Nd+4d+\sqrt{\mathcal D_d}}2.
\end{equation}
The remaining eigenvalues on the vector isotypic component are
\begin{equation}
 \mu_-=\frac{\Nd+4d-\sqrt{\mathcal D_d}}2,
 \qquad
 \mu_0=\Nd.
\end{equation}
The largest complementary branch is \(2(d+2)\).  For \(d>3\), \(\Nd>2(d+2)\); at \(d=3\) they are equal.  In every case \(\mu_+>\Nd\).  For \(d=3\) this is immediate.  For \(d\geq4\), one has \(\Nd-4d>0\) and
\(\mathcal D_d-(\Nd-4d)^2=8(d-1)(d-2)(d+2)>0\).
Thus \(\mu_+\) is simple on the multiplicity space and is strictly separated from every other branch.  The full top eigenspace is one copy of the vector representation,
\begin{equation}
 E_*^{(d)}=\operatorname{ran}W_{c_*^{(d)}},
 \qquad
 \dim E_*^{(d)}=d.
 \label{eq:Estar}
\end{equation}

If \(J\) is any Choi matrix attaining the optimal symmetric quadric-orbit local
fidelity, equality in the spectral variational bound forces
\(\operatorname{supp}J\subseteq E_*^{(d)}\).  Since \(W_{c_*^{(d)}}\) is an
isometry, there is an \(X\geq0\) such that
\(J=W_{c_*^{(d)}}XW_{c_*^{(d)}}^\dagger\).
For a general real multiplicity vector \(c\), direct contraction gives
\begin{align}
 \Tr_{AB}(W_cXW_c^\dagger)
 ={}&\frac{c_1^2+c_2^2}{d}\Tr(X)I
 +\frac{2c_1c_2}{d}X^T\nonumber\\
 &+\left[c_3^2+\frac{2c_3(c_1+c_2)}{d}\right]X.
 \label{eq:partial-trace-map}
\end{align}
Decompose the Hermitian matrix \(X\) according to transpose parity as
\(X=\alpha I+S+A\), where
\begin{equation}
 \alpha=\frac{\Tr X}{d},
 \qquad S^T=S,\quad\Tr S=0,
 \qquad A^T=-A.
\end{equation}
Hermiticity implies that \(S\) is real symmetric and \(A\) is purely imaginary antisymmetric.

Write \(c_*^{(d)}=(x,x,z)^T\) and \(x/z=-\kappa_d\).  The partial-trace map in Eq.~\eqref{eq:partial-trace-map} acts on the symmetric-traceless and antisymmetric sectors with eigenvalues
\begin{align}
 \lambda_S
 &=z^2+\frac{4xz+2x^2}{d}
 =\frac{z^2}{d}\left[2(\kappa_d-1)^2+d-2\right]>0,\label{eq:lambda-S}\\
 \lambda_A
 &=z^2+\frac{4xz-2x^2}{d}
 =\frac{z^2}{d}\left[d-4\kappa_d-2\kappa_d^2\right]<0.
 \label{eq:lambda-A}
\end{align}
The second inequality follows from \(\kappa_d>d/2\), as established in Sec.~\ref{sec:null-local-global}.  Trace preservation therefore forces \(S=A=0\).  On the scalar sector, \((c_*^{(d)})^TG_dc_*^{(d)}=1\) implies that the map sends \(\alpha I\) to \(\alpha I\), so trace preservation fixes \(\alpha=1\).  Hence
\(J=J_*^{(d)}=W_{c_*^{(d)}}W_{c_*^{(d)}}^\dagger\), which proves
uniqueness.  Since \(W_{c_*^{(d)}}\) is an isometry from \(\C^d\), the optimal Choi matrix is an orthogonal projector of rank \(d\).

\bibliography{fermionic_cloning_pra_revised}

\end{document}